\newcommand{\node}[1]{\mathrm{#1}}
\newcommand{\dist}{\mathrm{d}}
\newcommand{\ds}{\displaystyle}
\newcommand{\ts}{\textstyle}
\newcommand{\Rr}{{\mathbb R}}
\newcommand{\Ll}{\mathbf{L}}
\newcommand{\Aa}{\mathbf{A}}
\newcommand{\Dd}{\mathbf{D}}
\newcommand{\Kk}{\mathbf{K}}
\newcommand{\Mm}{\mathbf{M}}
\newcommand{\Uu}{\mathbf{U}}
\newcommand{\Ccc}{\mathbf{C}}
\newcommand{\itpx}{x_{\circ}}
\newcommand{\1}{\mathrm{1\hspace{-1mm}l}}
\newcommand{\minus}{%
  \setbox0=\hbox{-}%
  \vcenter{%
    \hrule width\wd0 height \the\fontdimen8\textfont3%
  }%
}
\newcommand{\normp}[1]{\raisebox{-2pt}[0mm][0mm]{\scalebox{.55}{$\mathcal{L}^{p}(#1)$}}}
\newcommand{\norminf}[1]{\raisebox{-2pt}[0mm][0mm]{\scalebox{.55}{$\mathcal{L}^{\infty}(#1)$}}}
\newcommand{\norminfq}[1]{\raisebox{-2pt}[0mm][0mm]{\scalebox{.55}{$\mathcal{L}^{\infty,q}(#1)$}}}
\newcommand{\norminfp}[1]{\raisebox{-2pt}[0mm][0mm]{\scalebox{.55}{$\mathcal{L}^{\infty,p}(#1)$}}}
\newcommand{\subind}[2]{\raisebox{-0.5pt}[0mm][0mm]{\scalebox{.55}{$#1,\hspace{-1pt} #2$}}}
\newcommand{\error}{\mathcal{E}}
\spnewtheorem{thm}{Theorem}{\bfseries}{\itshape}
\spnewtheorem{pro}[thm]{Proposition}{\bfseries}{\itshape}
\spnewtheorem{cor}{Corollary}{\bfseries}{\itshape}
\spnewtheorem{lem}{Lemma}{\bfseries}{\itshape}
\spnewtheorem{dfn}{Definition}{\bfseries}{\rmfamily}
\spnewtheorem{rem}{Remark}{\bfseries}{\rmfamily}
\spnewtheorem{exa}{Example}{\bfseries}{\rmfamily}
\begin{document}

\title{Partition of Unity Methods for Signal Processing on Graphs}
\subtitle{}

\author{Roberto Cavoretto \and
        Alessandra \mbox{De Rossi}  \and
        Wolfgang Erb       
}

\institute{R. Cavoretto, A. De Rossi \at
              University of Torino \\
              \email{roberto.cavoretto@unito.it} \\         
              \email{alessandra.derossi@unito.it} \\
	       W. Erb \at University of Padova \\
           \email{erb@math.unipd.it} 
}

\date{\today}

\titlerunning{Partition of Unity Methods for Signal Processing on Graphs}
\authorrunning{R. Cavoretto, A. De Rossi, W. Erb }

\maketitle

\begin{abstract} Partition of unity methods (PUMs) on graphs are simple and highly adaptive auxiliary tools for graph signal processing. Based on a greedy-type metric clustering and augmentation scheme, we show how a partition of unity can be generated in an efficient way on graphs. We investigate how PUMs can be combined with a local graph basis function (GBF) approximation method in order to obtain low-cost global interpolation or classification schemes. From a theoretical point of view, we study necessary prerequisites for the partition of unity such that global error estimates of the PUM follow from corresponding local ones. Finally, properties of the PUM as cost-efficiency and approximation accuracy are investigated numerically. 

\keywords{
Partition of unity method (PUM), graph signal processing, graph basis functions (GBFs), kernel-based approximation and interpolation, spectral graph theory }
\end{abstract}

\section{Introduction}

\begin{figure}[htbp]
	\centering
 	\includegraphics[width= 1\textwidth]{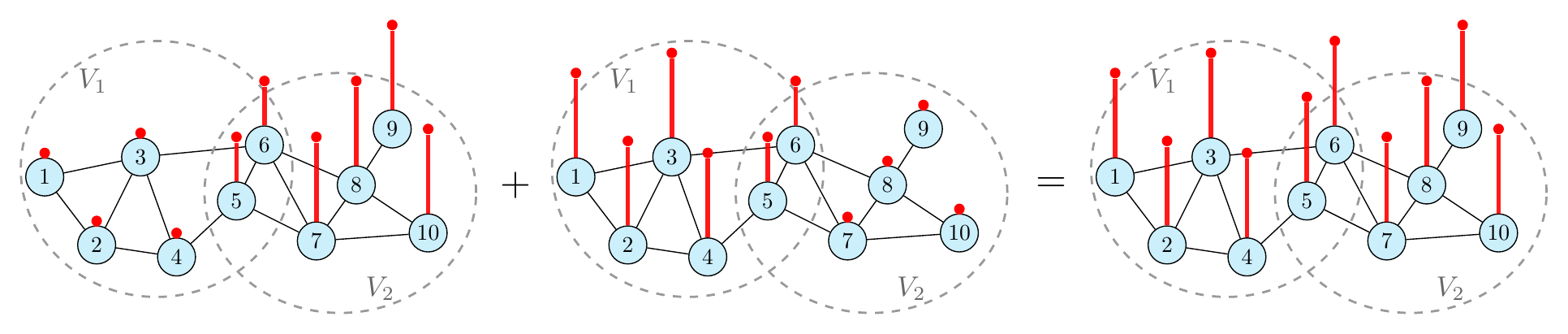} 
	\caption{Schematic sketch of a partition of unity on a graph.}
	\label{fig:graphPUM}
\end{figure}

Graph signal processing is a cutting-edge research field for the study of graph signals in which mathematical processing tools as filtering, compression, noise removal, sampling, or decomposition methods are investigated \cite{Ortega2018,Pesenson2008,shuman2016,StankovicDakovicSejdic2019}. Graph structures appear naturally in a multitude of modern applications, as in social networks, traffic maps or biological networks. In general, these networks exhibit a large number of vertices and a highly irregular edge structure. In order to be able to deal with signals on such irregular graphs, efficient and fast processing tools are necessary.

In this work, we introduce and analyze partition of unity methods (PUMs) on graphs as flexible and efficient auxiliary tools to individualize and accelerate signal processing steps. Many algorithms in graph signal processing as, for instance, the calculation of the graph Fourier transform get computationally infeasable if the size of the graph is too large or the topological structure of the network is not sparse. PUMs allow, in an efficient way, to perform operations as signal reconstruction from samples, classification of nodes, or signal filtering locally on smaller portions of the graph, and, then, to rebuild the global signal from the local ones (see Fig. \ref{fig:graphPUM} for a simple sketch). This makes a PUM to an ideal auxiliary tool also if more adaptivity is required and processing steps have to be individualized to local spatial prerequisites on the graph.

In recent years, PUMs have been successfully combined with a multitude of different computational methods. In \cite{GriebelSchweitzer2000,Melenk1996}, for instance, PUMs have been considered in the context of meshfree Galerkin methods to obtain more adaptive and robust solvers for differential equations. In the approximation with radial basis functions (RBFs), the combination of RBFs with PUMs yields significantly sparser system matrices in collocation or interpolation problems, and, therefore, a considerable speed-up of calculations \cite{cavoretto2015,cavoretto2018,cavoretto2019,cavoretto2020}, \cite[Chapter 29]{Fasshauer2007}, \cite{Larsson2017,Wendland2002}, \cite[Section 15.4]{We05}.

On graphs, PUMs combined with kernel-based interpolation or approximation techniques have similar benefits: the computations can be reduced, in a highly-flexible way, to local calculations on smaller portions of the graph. In this way, expensive global costs can be avoided and parameters of the approximating scheme can be adjusted to local prerequisites. In this work, we will focus on approximation methods based on generalized translates of a graph basis function (GBF) (see \cite{erb2019b,erb2020} for an overview). This kernel-based method is an analog of RBF approximation in the euclidean space and contains diffusion kernels \cite{KondorLafferty2002} and variational splines \cite{Pesenson2009,Ward2018interpolating} on graphs as special cases.

For the construction of partition of unities, the discrete graph structure allows to take advantage of a multitude of clustering algorithms to generate, in an automatic way, a cover of the vertex domain by basic subdomains. The spectral structure induced by the graph Laplacian offers one possibility for such a clustering \cite{vonLuxburg2007}. In this article, we pursue a second possibility that uses a discrete metric distance on the graph to determine $J$ center nodes for the clusters \cite{Gonzalez1985,hochbaumshmoys1985}. This metric $J$-center clustering has the advantage that, with a slight modification of the algorithm, the selection of the centers can be restricted to nodes that are relevant for further computations. Once the clusters are determined, an augmentation procedure can be applied to the clusters in order to obtain an overlapping cover of subdomains. As soon as this cover is determined, the partition of unity itself can be obtained by the use of Shepard weight functions \cite{shepard1968}. 

As for PUMs in euclidean settings, overlapping subdomains are generally desirable also on graph domains. This allows to incorporate the global geometric information of the graph into the calculations. Further, overlapping domains allow to construct partition of unities with coherent boundary conditions. In Theorem \ref{thm:globalerrorestimate}, the main theoretical finding of this article, we will show that global approximants based on the PUM inherit smoothness properties of the local approximants on the subdomains once the partition of unity satisfies suitable boundary conditions. \\

\noindent \textbf{Outline and main contributions of this article} 
\begin{enumerate}
\item In Section \ref{sec:preliminaries}, we give a brief overview on spectral graph theory, the graph Fourier transform and approximation with GBFs.
\item An efficient construction of a partition of unity on metric graphs is provided in Section \ref{sec:PU}. In order to obtain a basic splitting of a graph in $J$ parts we will use an adaption of metric $J$-center clustering. This adaption will guarantee that every cluster contains at least one node with sampling information. In a second step, we augment the $J$ clusters with neighboring nodes to generate a cover of $J$ overlapping subdomains. Subordinate to this cover, we show how to generate a partition of unity on the graph.
\item In Algorithm \ref{algorithm1} (given in Section \ref{sec:four}), we combine the partition of unity constructed in Section \ref{sec:PU} with the GBF approximation scheme outlined in Section \ref{sec:GBF}. In this way, we are able to merge local GBF approximations on subgraphs to a global GBF-PUM approximation scheme on the graph.
\item In Section \ref{sec:theory}, we prove that, under particular assumptions on the partition of unity, smoothness of the global PUM approximants is inherited from local smoothness. The corresponding main result is stated as a global error estimate for PUM approximation in Theorem \ref{thm:globalerrorestimate}. 
\item In the final Section \ref{sec:numerics}, we conclude this article with some numerical experiments on the cost-efficiency and the accuracy of the PUM. 
\end{enumerate}

\section{Spectral graph theory and approximation with GBFs} \label{sec:preliminaries}

\subsection{Preliminaries on graph theory} \label{sec:spectralgraphtheory}

In this article, we will consider simple and connected graphs equipped with a graph Laplacian (providing a harmonic structure on the graph), and a metric distance. More precisely, we define a graph $G$ as a quadruplet $G=(V,E,\mathbf{L},\mathrm{d})$ with the following components:
\begin{itemize}
\item A vertex set $V=\{\node{v}_1, \ldots, \node{v}_{n}\}$ consisting of $n$ graph nodes. 
\item A set $E \subseteq V \times V$ containing all (directed) vertices $e_{\subind{i}{i'}} = (\node{v}_i, \node{v}_{i'})$, $i \neq i'$, of the graph. We assume that $e_{\subind{i}{i'}} \in E \; \Leftrightarrow \; e_{\subind{i'}{i}} \in E$, i.e., that the graph is undirected. 
\item A general symmetric graph Laplacian $\Ll \in \Rr^{n \times n}$ of the form (as, for instance, described in \cite[Section 13.9]{GodsilRoyle2001})
\begin{equation} \label{eq:generalizedLaplacian}
\ds {\begin{array}{ll}\; \Ll_{\subind{i}{i'}}<0& \text{if $i \neq i'$ and $\node{v}_{i}, \node{v}_{i'}$ are connected, i.e., $e_{\subind{i}{i'}} \in E$}, \\ \; \Ll_{\subind{i}{i'}}=0 & \text{if $i\neq i' $ and $\node{v}_{i}, \node{v}_{i'} $ are not connected}, \\ \; \Ll_{\subind{i}{i}} \in \Rr & \text{for $i \in \{1, \ldots, n\}$}.\end{array}}
\end{equation}

\item A symmetric distance metric $\mathrm{d}$ on the vertex set $V$, satisfying a triangle inequality. A typical example of such a metric is given by the graph geodesic, i.e., the length of the shortest path connecting two graph nodes. We assume that $G$ is a connected graph and, thus, that the distance $\mathrm{d}$ between two arbitrary nodes is finite. 
\end{itemize}

The negative non-diagonal elements $\Ll_{\subind{i}{i'}}$, $i \neq i'$, of the Laplacian $\Ll$ contain the connection weights of the edges $e_{\subind{i}{i'}} \in E$. These are usually collected as positive weights in the symmetric adjacency matrix $\Aa$:
\begin{equation*}
    \mathbf{A}_{\subind{i}{i'}} := 
  \begin{cases}
    - \Ll_{\subind{i}{i'}}, & \text{if $e_{\subind{i}{i'}} \in E$}, \\
    0, & \text{otherwise}.
  \end{cases}
\end{equation*}
In this article, we set no restrictions on the diagonal entries of the graph Laplacian $\Ll$. There are however some well-known examples in which the diagonal entries of the graph Laplacian are fixed. One example is the negative adjacency matrix $\Ll_A = - \Aa$ in which all diagonal entries of $\Ll_A$ vanish. Another more prominent example is the standard graph Laplacian 
\begin{equation} \label{eq:standardLaplacian} \Ll_S = \mathbf{D} - \Aa\end{equation}
in which the diagonal is determined by the degree matrix $\Dd$ given by 
\begin{equation*}
    \mathbf{D}_{\subind{i}{i'}} := 
  \begin{cases}
    \sum_{j=1}^n \mathbf{A}_{\subind{i}{j}}, & \text{if } i=i', \\
    0, & \text{otherwise}.
  \end{cases}
\end{equation*}
It is well-known that the standard Laplacian $\Ll_S$ in \eqref{eq:standardLaplacian} is a positive semi-definite matrix and the multiplicity of the eigenvalue $\lambda_1 = 0$ corresponds to the number of connected components of the graph $G$. A further important example is the normalized graph Laplacian $\Ll_N$ defined by $\Ll_N = \Dd^{-1/2} \Ll_S \Dd^{-1/2}$ for which all diagonal entries are equal to $1$. A more profound introduction to combinatorial graph theory and the properties of different graph Laplacian can be found in \cite{Chung,GodsilRoyle2001}.  

In graph signal processing, the main research objects are graph signals. A graph signal $x: V \rightarrow \mathbb{R}$ is a function on the vertices $V$ of $G$. We denote the $n$-dimensional vector space of all graph signals by $\mathcal{L}(G)$. As the vertices in $V$ are ordered, we can represent every signal $x$ also as a vector $x = (x(\node{v}_1), \ldots, x(\node{v}_n))^{\intercal}\in \mathbb{R}^n$. On the space $\mathcal{L}(G)$, we have a natural inner product of the form 
$$y^\intercal x := \sum_{i=1}^n x(\node{v}_i) y(\node{v}_i).$$ 
The system $\{\delta_{\node{v}_1}, \ldots, \delta_{\node{v}_n}\}$ of unit vectors forms a canonical orthonormal basis of $\mathcal{L}(G)$, where the unit vectors $\delta_{\node{v}_{i'}}$ are defined as $\delta_{\node{v}_{i'}}(\node{v}_i) = \delta_{\subind{i}{i'}}$ for $i,i' \in \{1, \ldots,n\}$. In this article, we will also consider functions on the set of edges $E$ of the graph $G$. We denote the corresponding space of functions by $\mathcal{L}(E)$. 

\subsection{The graph Fourier transform}
An important tool for denoising, filtering and decomposition of graph signals is the graph Fourier transform. On graphs, this transform is defined via the eigendecomposition of the graph Laplacian $\Ll$:
\begin{equation*}
\mathbf{L}=\mathbf{U}\mathbf{M}_{\lambda} \mathbf{U^\intercal}.
\end{equation*}
Here, $\mathbf{M}_{\lambda} = \mathrm{diag}(\lambda) = \text{diag}(\lambda_1,\ldots,\lambda_{n})$ 
is the diagonal matrix containing the increasingly ordered eigenvalues $\lambda_i$, $i \in \{1, \ldots, n\}$, of $\mathbf{L}$ as diagonal entries.
The column vectors $\hat{G} = \{ u_1, \ldots, u_{n}\}$ of the orthonormal matrix $\mathbf{U}$ form an orthonormal basis of eigenvectors for the space $\mathcal{L}(G)$ with respect to the eigenvalues $\lambda_1, \ldots, \lambda_n$. We call $\hat{G}$ the spectrum or the Fourier basis of the graph $G$. 
Given the spectrum $\hat{G}$, the graph Fourier transform of a graph signal $x$, and the inverse Fourier transform are given as 
\begin{equation*}
\hat{x} := \mathbf{U^\intercal}x  = (u_1^\intercal x, \ldots, u_n^\intercal x)^{\intercal}, \quad \text{and} \quad x = \mathbf{U}\hat{x},
\end{equation*}
respectively. The entries $\hat{x}_k = u_k^\intercal x$, $k \in \{1, \ldots, n\}$, of the Fourier transform $\hat{x}$ describe the frequency components of the signal $x$ with respect to the basis functions $u_i$. As in classical settings like the euclidean space, the Fourier transform on graphs is a crucial tool for the processing and manipulation of graph signals. Overviews about possible uses of the graph Fourier transform, in particular as analysis or decomposition tool for graph signals, can, for instance, be found in \cite{erb2019,Ortega2018,Pesenson2008,shuman2016,StankovicDakovicSejdic2019}.   

\subsection{Positive definite GBFs for signal approximation on graphs} 
\label{sec:GBF}
The usage of positive definite GBFs provides a simple and efficient tool for the interpolation and approximation of graph signals if only a few samples $x(\node{w}_i)$ of a graph signal $x$ are known on a fixed sampling set $W = \{\node{w}_1, \ldots, \node{w}_N\} \subset V$, see \cite{erb2019b,erb2020}. The theory follows closely a corresponding theory on scattered data approximation with positive definite radial basis functions (RBFs) in the euclidean space \cite{SchabackWendland2003,We05}.

In this method on graphs, the approximation spaces are built upon generalized shifts of a GBF $f$. These are defined in terms of a convolution operator $\mathbf{C}_y$ acting on a signal $x$ as   
\[\mathbf{C}_y x = \mathbf{U}\mathbf{M}_{\hat{y}}\mathbf{U^\intercal} x. \] 
A function $f \in \mathcal{L}(G)$ is called a positive definite GBF if the matrix 
\[ \mathbf{K}_{f} = \begin{pmatrix} \mathbf{C}_{\delta_{\node{v}_1}} f(\node{v}_1) & \mathbf{C}_{\delta_{\node{v}_2}} f(\node{v}_1) & \ldots & \mathbf{C}_{\delta_{\node{v}_n}} f(\node{v}_1) \\
\mathbf{C}_{\delta_{\node{v}_1}} f(\node{v}_2) & \mathbf{C}_{\delta_{\node{v}_2}} f(\node{v}_2) & \ldots & \mathbf{C}_{\delta_{\node{v}_n}} f(\node{v}_2) \\
\vdots & \vdots & \ddots & \vdots \\
\mathbf{C}_{\delta_{\node{v}_1}} f(\node{v}_n) & \mathbf{C}_{\delta_{\node{v}_2}} f(\node{v}_n) & \ldots & \mathbf{C}_{\delta_{\node{v}_n}} f(\node{v}_n)
\end{pmatrix}\]
is symmetric and positive definite. Here $\{\delta_{\node{v}_1}, \ldots \delta_{\node{v}_n}\}$ denotes the standard unit basis in $\mathcal{L}(G)$. 

The signals $\mathbf{C}_{\delta_{\node{v}_i}} f$ can be interpreted as generalized translates of the basis function $f$ on the graph $G$. In fact, if $G$ has a group structure and the spectrum $\hat{G}$ consists of properly scaled characters of $G$, then the basis functions $\mathbf{C}_{\delta_{\node{v}_i}} f$ are shifts of the signal $f$ by the group element $\node{v}_i$. 

A GBF $f$ is positive definite if and only if $\hat{f}_k > 0$ for all $k \in \{1, \ldots, n\}$ (for the derivations, see \cite{erb2019b}). Further, the kernel $K_f(\node{v},\node{w}) := \mathbf{C}_{\delta_{\node{w}}}f(\node{v})$ has the Mercer decomposition
\[ K_f(\node{v},\node{w}) = \mathbf{C}_{\delta_{\node{w}}}f(\node{v}) = \sum_{k=1}^n \hat{f}_k \, u_k(\node{v}) \, u_k(\node{w}).\]
In this way, a positive definite GBF $f$ induces, in a natural way, an inner product $\langle x,y \rangle_{K_f}$ and a norm $\| x \|_{K_f}$ as 
\[ \langle x , y \rangle_{K_f} = 
\sum_{k=1}^n \frac{\hat{x}_k \, \hat{y}_k}{\hat{f}_k} = \hat{y}^\intercal \mathbf{M}_{1/\hat{f}} \, \hat{x} \quad \text{and} \quad \| x \|_{K_f} = \sqrt{\sum_{k=1}^n \frac{\hat{x}_k^2}{\hat{f}_k}}. \]
The space $\mathcal{L}(G)$ of signals endowed with this inner product is a reproducing kernel Hilbert space $\mathcal{N}_{K_f}$ with reproducing kernel $K_f$ (a systematic study of such Hilbert spaces is given in \cite{Aronszajn1950}).

To obtain a GBF approximation $x_*$ of a signal $x$ for a few known samples $x(\node{w}_i)$, $i \in \{1, \ldots, N\}$, we consider the solution of the following regularized least squares (RLS) problem (more details are given in \cite{erb2020,Rifkin2003,SmolaKondor2003})
\begin{equation} \label{eq:RLSfunctional}
x_* = \underset{y \in \mathcal{N}_{K_f}}{\mathrm{argmin}} \left( \frac{1}{N} \sum_{i=1}^N |x(\node{w}_i)-y(\node{w}_i)|^2 + \gamma \|y\|_{K_f}^2 \right), \quad \gamma > 0.
\end{equation}

The first term in this RLS functional is referred to as data fidelity term that ensures that the values $x_*(\node{w}_i)$ are close to the values $x(\node{w}_i) \in \Rr$, on the sampling set $W = \{\node{w}_1, \ldots, \node{w}_N\} \subset V$. The second term with the regularization parameter $\gamma >0$ is called regularization term. It forces the optimal solution $x_*$ to have a small Hilbert space norm $\|x_*\|_{K}$.
The representer theorem \cite[Theorem 4.2]{Schoelkopf2002} implies that the minimizer $x_*$ of the RLS functional \eqref{eq:RLSfunctional}
can be uniquely written as 
\begin{equation} \label{eq:representertheorem}
x_*(\node{v}) = \sum_{i = 1}^N c_i \mathbf{C}_{\delta_{\node{w}_i}}f(\node{v}), 
\end{equation}
and is therefore an element of the subspace
\[\mathcal{N}_{K,W} = \left\{x \in \mathcal{L}(G) \ \Big| \ x = \sum_{i=1}^N c_i \mathbf{C}_{\delta_{\node{w}_i}}f, \; c_i \in \Rr \right\}. \]
Furthermore, the coefficients $(c_1, \ldots, c_N)^\intercal$ in \eqref{eq:representertheorem} can be calculated as the solution of the linear system (cf. \cite{Rifkin2003}, \cite[Theorem 1.3.1.]{Wahba1990}) 
\begin{equation} \label{eq:computationcoefficients} 
 \left(\begin{array}{c} \phantom{\node{c}_1} \\ \phantom{\vdots} \\ \phantom{c_N} \end{array}\right. \hspace{-0.6cm} \underbrace{ \begin{pmatrix} \mathbf{C}_{\delta_{\node{w}_1}}f(\node{w}_1) & \ldots & \mathbf{C}_{\delta_{\node{w}_N}}f(\node{w}_1) \\
\vdots & \ddots & \vdots \\
\mathbf{C}_{\delta_{\node{w}_N}}f(\node{w}_1) & \ldots & \mathbf{C}_{\delta_{\node{w}_N}}f(\node{w}_N)
\end{pmatrix}}_{\mathbf{K}_{f,W}} + \gamma N \mathbf{I}_N \hspace{-0.75cm} \left. \begin{array}{c} \phantom{\node{c}_1} \\ \phantom{\vdots} \\ \phantom{c_N} \end{array}\right) \begin{pmatrix} \node{c}_1 \\ \vdots \\ c_N \end{pmatrix}
= \begin{pmatrix} x(\node{w}_1) \\ \vdots \\ x(\node{w}_N) \end{pmatrix}.
\end{equation}
The principal submatrix $\mathbf{K}_{f,W}$ of the positive definite matrix $\mathbf{K}_f$ is positive definite by the inclusion principle \cite[Theorem 4.3.15]{HornJohnson1985}. The linear system \eqref{eq:computationcoefficients} has therefore a unique solution and provides a unique GBF approximation $x_*$.

For a vanishing regularization parameter $\gamma \to 0$, the limit $\itpx = \lim_{\gamma \to 0}x_*$ is uniquely determined by the condition \eqref{eq:representertheorem} and the unique coefficients calculated in \eqref{eq:computationcoefficients} with $\gamma = 0$. 
The resulting signal $\itpx$ interpolates the data $(\node{w}_i,x(\node{w}_i))$, i.e. we have $\itpx(\node{w}_i) = x(\node{w}_i)$ for all $i \in \{1, \ldots, N\}$. 

\begin{exa} \label{exa:GBFs}
Two main examples of positive definite GBFs are the following: 
\begin{enumerate}
\item[(i)] The diffusion kernel on a graph \cite{KondorLafferty2002} is defined by the Mercer decomposition
\[\mathbf{K}_{f_{e^{-t \mathbf{L}}}} = e^{ -t \mathbf{L}} = \sum_{k=1}^n e^{ -t \lambda_k} u_k u_k^{\intercal},\]
where $\lambda_k$, $k \in \{1, \ldots, n\}$, denote the eigenvalues of the graph Laplacian $\Ll$. This kernel is positive definite for all $t \in \Rr$. The corresponding GBF $f_{e^{-t \mathbf{L}}}$ is uniquely determined by the Fourier transform
\[\hat{f}_{e^{-t \mathbf{L}}} = (e^{-t \lambda_1}, \ldots, e^{-t \lambda_n}). \]
\item[(ii)] Variational or polyharmonic splines on graphs are based on the kernel
$$ \mathbf{K}_{f_{(\epsilon \mathbf{I}_n + \mathbf{L})^{-s}}} = (\epsilon \mathbf{I}_n + \mathbf{L})^{-s} = \sum_{k=1}^n \frac{1}{(\epsilon + \lambda_k)^s} u_k u_k^{\intercal}.$$
This kernel is positive definite for $\epsilon > - \lambda_1$ and $s > 0$. Variational splines are studied in \cite{Pesenson2009,Ward2018interpolating} as
interpolants $\itpx$ that minimize the energy functional $ x^{\intercal}(\epsilon \mathbf{I}_n + \mathbf{L})^{s} x$. 
They can be regarded as GBF interpolants based on the GBF with the Fourier transform
\[ \hat{f}_{(\epsilon \mathbf{I}_n + \mathbf{L})^{-s}} = \ts \left(\frac{1}{(\epsilon + \lambda_1)^s}, \ldots, \frac{1}{(\epsilon + \lambda_n)^s}\right). \]
\end{enumerate}
\end{exa}

\section{How to obtain a partition of unity on graphs} \label{sec:PU}

\subsection{Graph partitioning by metric $J$-center clustering}
A simple way to split a connected graph $G$ in $J$ disjoint clusters $\node{C}_j$, $j \in \{1,\ldots, J\}$ is by minimizing an objective functional that measures the distances of the nodes inside the partitions. For this, a common approach is to consider a metric $\mathrm{d}$ on the graph $G$ and $J$ center nodes $\node{q}_j \in \node{C}_j$ as reference nodes for the clusters $\node{C}_j$. In metric $J$-center clustering (in the literature usually a $k$ is used instead of the $J$), the centers $\node{Q}_J = \{\node{q}_1, \ldots, \node{q}_J\}$ are determined such that the objective functional
\[ h(\node{Q}_J) = \max_{i \in \{1,\ldots,n\}} \min_{j \in \{1,\ldots,J\}} \mathrm{d}(\node{q}_j,\node{v}_i)\]
is minimized. We refer the functional $h(\node{Q}_J)$ to as the fill distance of the node set $\node{Q}_J$. It corresponds to the largest possible distance of a node $\node{v} \in V$ to the closest node in $\node{Q}_J$. As a second interpretation, $h(\node{Q}_J)$ can be regarded as the smallest radius such that every node $\node{v} \in V$ is contained in at least one ball of radius $h(\node{Q}_J)$ centered at a node in $\node{Q}_J$. Finding the set $\node{Q}_J^{\mathrm{opt}}$ that minimizes the fill distance $h$ is in general a NP-hard problem \cite{hochbaumshmoys1985}. A feasible alternative to the calculation of the global minimum $\node{Q}_J^{\mathrm{opt}}$ is to use a greedy procedure for the preallocation of the centers \cite{Gonzalez1985}. In view of our application, we also want to guarantee that every cluster $\node{C}_j$ contains at least one sampling node in $W$. To achieve this, we will only consider center nodes $\node{q}_j$ that lie in the node set $W$.  
The corresponding restricted greedy $J$-center clustering is summarized in Algorithm \ref{alg:greedyjcenter}.

\begin{center}
\begin{algorithm}[H] \label{alg:greedyjcenter}
\small

\caption{Restricted greedy $J$-center clustering based on \newline interpolation nodes}

\vspace{1mm}

\KwIn{The interpolation nodes $W = \{\node{w}_1, \ldots, \node{w}_N\}$, a starting center $\node{q}_1 \in W$ and the number $J$ of partitions.  
}

\vspace{1mm}

\For{$j = 2$ to $J$}
    {select a center $\node{q}_j = \ds \underset{\node{w} \in W}{\mathrm{arg max}} \min_{\node{q} \in \{\node{q}_1, \ldots, \node{q}_{j-1}\}} \dist(\node{q},\node{w})$
    farthest away from $Q_{j-1} = \{\node{q}_1, \ldots, \node{q}_{j-1}\}$.}
 
\vspace{1mm}    

\noindent \textbf{Calculate} the $J$ clusters $$\node{C}_j = \{ \node{v} \in V \ : \ \dist(\node{q}_j,\node{v}) = \ds \min_{\node{q} \in \{\node{q}_1, \ldots, \node{q}_{J}\}} \dist(\node{q},\node{v})\}, \quad j \in \{1, \ldots, J\}.$$

\noindent \textbf{Return} centers $\node{Q}_J = \{\node{q}_1, \ldots, \node{q}_{J}\}$ and clusters $\{\node{C}_1, \ldots, \node{C}_{J}\}$.

\end{algorithm}
\end{center}

\begin{rem}
The greedy selection of the node $\node{q}_j$ in Algorithm \ref{alg:greedyjcenter} is in general not unique. In case of non-uniqueness, a simple selection rule to obtain $\node{q}_j$ out of the maximizers is the lowest-index rule. In this rule, we choose $\node{q}_j$ as the admissible node $\node{w}_k \in \{\node{w}_1, \ldots, \node{w}_N\}$ with the lowest index $k$. 
\end{rem}

If the distribution of the interpolation nodes $W$ inside the graph nodes $V$ is fairly reasonable, the restricted greedy method in Algorithm \ref{alg:greedyjcenter} provides a graph clustering which is close to the optimal solution. This is specified in the following result.  

\begin{thm} Algorithm \ref{alg:greedyjcenter} generates a center set $\node{Q}_J$ with $J$ centers and clusters $\{\node{C}_1, \ldots, \node{C}_{J}\}$ such that each cluster $\node{C}_j$ contains at least one interpolation node of $W$ and the fill distance $h(\node{Q}_J)$ is bounded by
\[h(\node{Q}_J) \leq 2 h(\node{Q}_J^{\mathrm{opt}}) + h(W).\]
In particular, if $W$ satisfies $h(W) \leq \epsilon h(\node{Q}_J^{\mathrm{opt}})$ for some $\epsilon > 0$ then the greedy method in Algorithm \ref{alg:greedyjcenter} yields a $(2 + \epsilon)$-approximation to the optimal solution. 
\end{thm}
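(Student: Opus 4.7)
My plan is to adapt the classical analysis of Gonzalez's farthest-point greedy algorithm, carefully tracking the extra error introduced by restricting the candidate centers to the sampling set $W$. The cluster-containment claim is immediate: since $\dist(\node{q}_j,\node{q}_j)=0$, every center $\node{q}_j$ lies in its own Voronoi cell $\node{C}_j$, and $\node{q}_j \in W$ by construction.

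For the fill-distance bound, write $h := h(\node{Q}_J)$ and $h^* := h(\node{Q}_J^{\mathrm{opt}})$. The strategy is to exhibit $J+1$ nodes that are pairwise at distance at least $h - h(W)$ and then apply the pigeonhole principle: since the $J$ optimal balls of radius $h^*$ cover all of $V$, two of these $J+1$ nodes must share an optimal ball and thus be at distance at most $2h^*$ by the triangle inequality. Combining the two bounds yields $h - h(W) \leq 2 h^*$, which is the desired estimate. To produce the extra point needed beyond $\{\node{q}_1,\ldots,\node{q}_J\}$, I would introduce the phantom candidate
\[\node{w}^{**} \;:=\; \underset{\node{w}\in W}{\mathrm{arg max}}\;\min_{j \in \{1,\ldots,J\}}\dist(\node{q}_j,\node{w}).\]
If $\node{v}^* \in V$ realizes $h$ and $\node{w}^* \in W$ is a closest sampling node to $\node{v}^*$ (so $\dist(\node{v}^*,\node{w}^*) \leq h(W)$ by definition of the fill distance of $W$), the triangle inequality gives $\min_j\dist(\node{q}_j,\node{w}^*)\geq h - h(W)$, and hence the maximality of $\node{w}^{**}$ implies $\min_j\dist(\node{q}_j,\node{w}^{**})\geq h - h(W)$.

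The main obstacle is to verify that every pair in the augmented set $\{\node{q}_1,\ldots,\node{q}_J,\node{w}^{**}\}$ is separated by at least $h - h(W)$. I plan to deduce this by exploiting the greedy rule itself: for any $j < j'\leq J$ the node $\node{q}_{j'}$ was an admissible candidate in $W$ at step $j$, so the maximization defining $\node{q}_j$ forces
\[\min_{i<j}\dist(\node{q}_i,\node{q}_j) \;\geq\; \min_{i<j}\dist(\node{q}_i,\node{q}_{j'}) \;\geq\; \min_{i<j'}\dist(\node{q}_i,\node{q}_{j'}),\]
and the analogous chain holds with $\node{w}^{**}\in W$ in place of $\node{q}_{j'}$. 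Iterating these inequalities shows $\min_{i<j}\dist(\node{q}_i,\node{q}_j) \geq \min_{i\leq J}\dist(\node{q}_i,\node{w}^{**}) \geq h - h(W)$ for every $j$, which is precisely the required pairwise separation. The $(2+\epsilon)$-approximation statement then follows at once: substituting the hypothesis $h(W)\leq \epsilon h^*$ into $h\leq 2h^* + h(W)$ yields $h(\node{Q}_J)\leq (2+\epsilon)\,h(\node{Q}_J^{\mathrm{opt}})$.
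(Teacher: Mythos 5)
Your argument is correct, but it follows a genuinely different route from the paper's. The paper adapts the two-case analysis of Har-Peled's Theorem 4.3: either every optimal cluster $\node{C}_j^{\mathrm{opt}}$ contains exactly one greedy center (then the triangle inequality through $\node{q}_j^{\mathrm{opt}}$ gives $2h(\node{Q}_J^{\mathrm{opt}})$ directly), or some optimal cluster contains two greedy centers $\node{q}_k,\node{q}_j$ ($k<j$), in which case the bound is extracted from the chain $h(\node{Q}_J)\leq h(\node{Q}_{j-1})\leq \dist(\node{v}_j^{\mathrm{opt}},\node{q}_k)\leq \dist(\node{q}_j,\node{q}_k)+h(W)\leq 2h(\node{Q}_J^{\mathrm{opt}})+h(W)$, where $\node{v}_j^{\mathrm{opt}}$ realizes $h(\node{Q}_{j-1})$ and is replaced by its nearest sampling node at a cost of $h(W)$. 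You instead run the pigeonhole version of the Gonzalez analysis: you manufacture a $(J+1)$-st candidate $\node{w}^{**}\in W$ whose distance to $\node{Q}_J$ is at least $h(\node{Q}_J)-h(W)$, prove via the monotonicity of the greedy radii that the augmented set is pairwise $\bigl(h(\node{Q}_J)-h(W)\bigr)$-separated, and force a collision inside one of the $J$ optimal balls. The two arguments are close cousins (both ultimately rest on the decreasing greedy radii and on two points sharing an optimal cluster), but yours is more uniform — no case split, and the restriction of candidates to $W$ visibly costs exactly one additive $h(W)$ in the separation radius — while the paper's version isolates the "generic" case in which the bound is the clean $2h(\node{Q}_J^{\mathrm{opt}})$ without the $h(W)$ penalty. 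One cosmetic point you may wish to add: if $\node{w}^{**}$ happens to coincide with some $\node{q}_i$, the pigeonhole step degenerates, but then your own inequality $\min_{i\leq J}\dist(\node{q}_i,\node{w}^{**})\geq h(\node{Q}_J)-h(W)$ reads $0\geq h(\node{Q}_J)-h(W)$, so the claimed bound holds trivially; with that remark the proof is complete.
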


\begin{proof}
We proceed similar as in the proof of the original result in which no reduction to a subset is performed, see \cite[Theorem 4.3.]{HarPeled2011}. We divide the proof into two disjoint cases: 

We assume first that every cluster $\node{C}_j^{\mathrm{opt}}$, $j \in \{1, \ldots, J\}$, based on the optimal centers $\node{Q}_J^{\mathrm{opt}}$ contains exactly one center from the set $\node{Q}_J$. For an arbitrary node $\node{v}\in V$, let $\node{q}_j^{\mathrm{opt}}$ be the center of the cluster $\node{C}_j^{\mathrm{opt}}$ in which $\node{v}$ is contained and let $\node{q}_j$ be the unique node in $\node{Q}_J$ which is also contained in $\node{C}_j^{\mathrm{opt}}$. Then, the triangle inequality gives
\[\mathrm{d}(\node{v}, \node{q}_j ) \leq \mathrm{d}(\node{v}, \node{q}_j^{\mathrm{opt}} ) + \mathrm{d}(\node{q}_j^{\mathrm{opt}}, \node{q}_j ) = 2 h(\node{Q}_J^{\mathrm{opt}}).\]
In the second case, there exists a cluster $\node{C}_j^{\mathrm{opt}}$ (with center $\node{q}_j^{\mathrm{opt}}$) that contains two distinct elements $\node{q}_j$ and $\node{q}_k$ of the set $\node{Q}_J$. Without loss of generality we assume that $k < j$, and that $\node{q}_j$ is added later to the center set $\node{Q}_J$ in the $j$-th iteration of the greedy algorithm. The greedy selection in Algorithm \ref{alg:greedyjcenter} picks a node $\node{q}_j$ in the interpolation set $W$ that is farthest away from $Q_{j-1}$. Let $\node{v}_j^{\mathrm{opt}} \in V$ be an element that maximizes 
\[ h(\node{Q}_{j-1}) = \max_{\node{v} \in V} \min_{l \in \{1,\ldots,j-1\}}  \mathrm{d}(\node{q}_l,\node{v}).\]
Further, let $\node{w}_j$ be the element in $W$ that is closest to $\node{v}_j^{\mathrm{opt}}$. Then 
\begin{align*}
h(\node{Q}_J) \leq h(\node{Q}_{j-1}) &= \min_{l \in \{1,\ldots,j-1\}} \mathrm{d}(\node{v}_j^{\mathrm{opt}},\node{q}_l) \leq \mathrm{d}(\node{v}_j^{\mathrm{opt}},\node{q}_k) \\
& \leq \mathrm{d}(\node{w}_j, \node{q}_{k}) + \mathrm{d}(\node{v}_j^{\mathrm{opt}}, \node{w}_j ) \\
 &\leq \mathrm{d}(\node{q}_j, \node{q}_{k}) + \mathrm{d}(\node{v}_j^{\mathrm{opt}}, \node{w}_j ) \\
  &\leq \mathrm{d}(\node{q}_j, \node{q}_{j}^{\mathrm{opt}}) + 
  \mathrm{d}(\node{q}_{j}^{\mathrm{opt}},\node{q}_k ) + \mathrm{d}(\node{v}_j^{\mathrm{opt}}, \node{w}_j ) \\
  &\leq 2 h(\node{Q}_J^{\mathrm{opt}}) + h(W).
\end{align*} \qed
\end{proof}

\subsection{Augmentation of graph clusters}
Although it is possible to use the disjoint clusters $\node{C}_j$, $j \in \{1,\ldots,J\}$, directly as subdomains for a partition of unity on the vertex set $V$, this is not recommendable, as the topological information of the graph $G$ gets lost. Instead, we will enlarge the clusters $\node{C}_j$ to overlapping subdomains $V_j$ such that the partition of unity is subsequently generated subordinate to the covering given by the subdomains $V_j$, $j \in \{1, \ldots, J\}$. The relevance of this domain augmentation step will be visible in the numerical experiments performed in Section \ref{sec:numerics}, as well as in the theoretical error estimates of Theorem \ref{thm:globalerrorestimate}. 

The enlargement of the clusters $\node{C}_j$ can be performed in several ways. In the method described in Algorithm \ref{alg:enlargedomains}, we enlarge the clusters $\node{C}_j$ by all vertices of $V$ that have a graph distance to some element of $\node{C}_j$ less or equal to $r \geq 0$. Note that a suitable choice of the augmentation distance $r$ is important for the performance of the PUM. The parameter $r$ must be chosen large enough to guarantee a proper approximation power of the method. On the other hand, it is not desirable to enlarge the clusters $\node{C}_j$ too much, as the overall costs of the computations on the subdomains $V_j$ increase with increasing $r$. We will investigate this trade-off experimentally in Section \ref{sec:numerics}. 

\begin{center}
\begin{algorithm}[H] \label{alg:enlargedomains}

\small

\caption{Augmentation of clusters with neighbors of distance $r$}

\vspace{1mm}

\KwIn{A cluster $\node{C}_j$ and an enlargement distance $r \geq 0$. 
}

\vspace{1mm}

\textbf{Add} all nodes $\node{v} \in V$ with a graph distance less than $r$ to an element of $\node{C}_j$, i.e., 
$$V_j = \node{C}_j \cup \{\node{v} \in V \ | \ \mathrm{d}(\node{v},\node{w}) \leq r, \; \node{w} \in \node{C}_j\}.$$

\noindent \textbf{Return} subdomain $ V_j $.

\end{algorithm}
\end{center}

\begin{cor} The cover $\{ V_j \}_{j=1}^J$ of the node set $V$ generated by Algorithm \ref{alg:greedyjcenter} and Algorithm \ref{alg:enlargedomains} has the following properties: 
\begin{enumerate}[label=(\roman*)]
\item Each subdomain $V_j$ contains a center node $\node{q}_j$ which is also an element of the sampling set $W$.
\item The distance of each node $\node{v} \in V_j$ to the center node $\node{q}_j$ is at most $h(\node{Q}_J) + r$. In particular, the diameter of each subdomain $V_j$ is bounded by $2 h(\node{Q}_J) + 2 r$.
\end{enumerate}
\end{cor}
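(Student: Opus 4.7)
The plan is to verify the two properties directly from the construction rules of Algorithms \ref{alg:greedyjcenter} and \ref{alg:enlargedomains}, using only the defining clustering rule, the augmentation condition, and the triangle inequality for the metric $\mathrm{d}$.

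For part (i), I would first observe that Algorithm \ref{alg:greedyjcenter} explicitly selects every center $\node{q}_j$ from the sampling set $W$ (this is the whole point of the restricted variant); in particular $\node{q}_1 \in W$ by the input, and each subsequent $\node{q}_j$ is obtained as an $\mathrm{arg\,max}$ over $W$. It then remains to check $\node{q}_j \in \node{C}_j$: since $\mathrm{d}(\node{q}_j,\node{q}_j) = 0$, the minimum in the cluster definition is attained at $\node{q}_j$, so $\node{q}_j \in \node{C}_j$. Algorithm \ref{alg:enlargedomains} only enlarges the cluster ($V_j \supseteq \node{C}_j$), hence $\node{q}_j \in V_j \cap W$.

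For part (ii), I would split an arbitrary $\node{v} \in V_j$ into the two cases coming from the union in Algorithm \ref{alg:enlargedomains}. If $\node{v} \in \node{C}_j$, then by the cluster definition and the definition of the fill distance,
\[ \mathrm{d}(\node{q}_j,\node{v}) = \min_{\node{q} \in \node{Q}_J} \mathrm{d}(\node{q},\node{v}) \leq \max_{\node{v}' \in V} \min_{\node{q} \in \node{Q}_J} \mathrm{d}(\node{q},\node{v}') = h(\node{Q}_J). \]
If instead $\node{v} \in V_j \setminus \node{C}_j$, there is some $\node{w} \in \node{C}_j$ with $\mathrm{d}(\node{v},\node{w}) \leq r$, and applying the previous bound to $\node{w}$ together with the triangle inequality gives
\[ \mathrm{d}(\node{q}_j,\node{v}) \leq \mathrm{d}(\node{q}_j,\node{w}) + \mathrm{d}(\node{w},\node{v}) \leq h(\node{Q}_J) + r. \]
In either case $\mathrm{d}(\node{q}_j,\node{v}) \leq h(\node{Q}_J) + r$, proving the first assertion of (ii). The diameter bound then follows from one more application of the triangle inequality: for any $\node{v},\node{v}' \in V_j$,
\[ \mathrm{d}(\node{v},\node{v}') \leq \mathrm{d}(\node{v},\node{q}_j) + \mathrm{d}(\node{q}_j,\node{v}') \leq 2 h(\node{Q}_J) + 2r. \]

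There is no genuine obstacle here; the corollary is a bookkeeping statement extracting the geometric consequences of the two algorithms. The only point to be careful about is that the cluster bound $\mathrm{d}(\node{q}_j,\node{v}) \leq h(\node{Q}_J)$ for $\node{v} \in \node{C}_j$ uses $h(\node{Q}_J)$ as computed on the \emph{whole} vertex set $V$ (i.e.\ the fill distance of $\node{Q}_J$ in $V$, not in $W$), which is consistent with the notation used throughout Section \ref{sec:PU}.
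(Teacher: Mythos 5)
Your proof is correct, and the paper in fact states this corollary without any proof, treating it as an immediate consequence of the two algorithms; your argument (centers lie in $W$ by construction and in their own cluster since $\mathrm{d}(\node{q}_j,\node{q}_j)=0$, then the fill-distance bound on $\node{C}_j$ plus one triangle inequality for the augmented nodes) is exactly the intended verification. No gaps.
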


\begin{figure}[htbp]
	\centering
	\includegraphics[width= 0.24\textwidth]{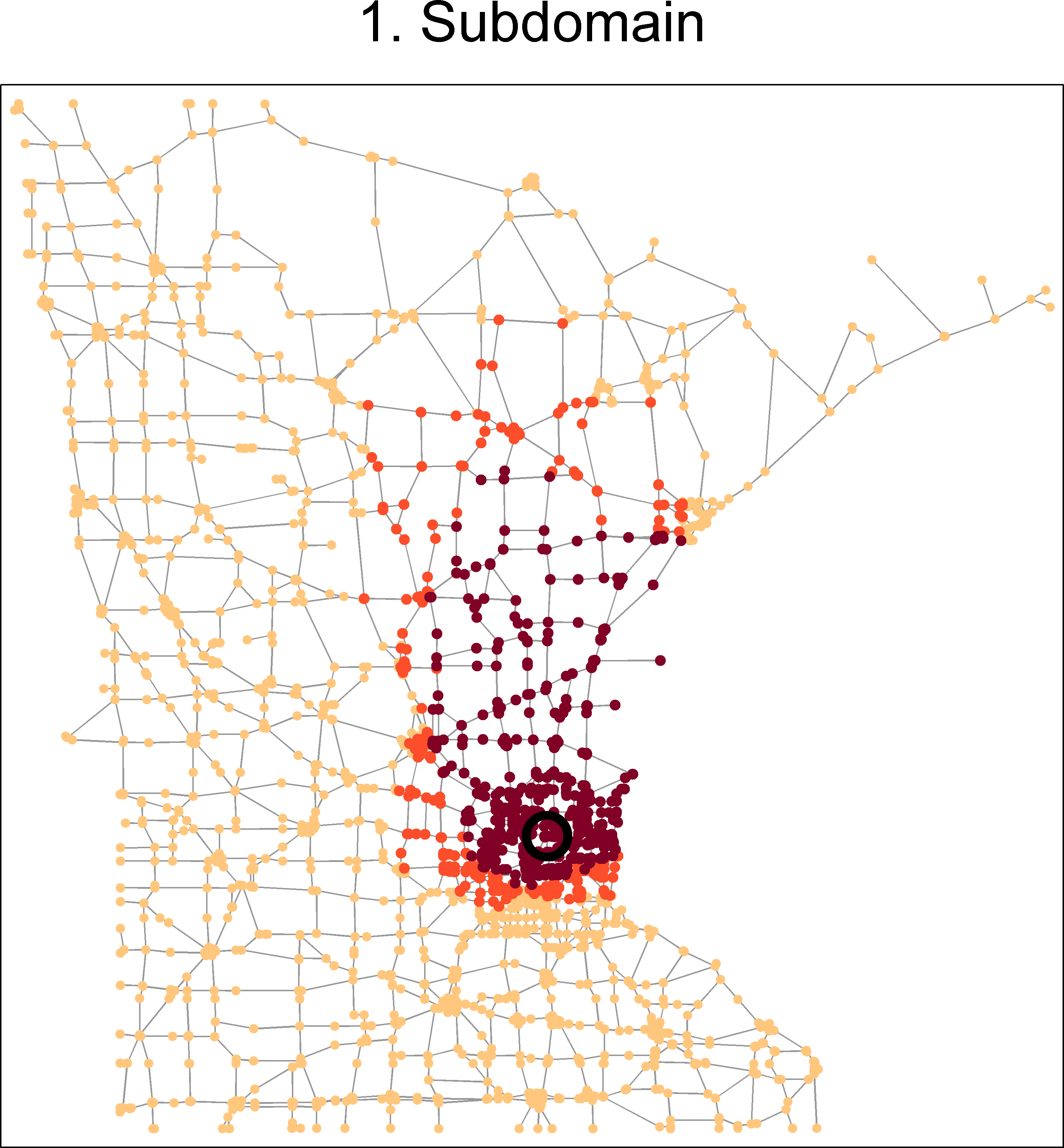}
    \includegraphics[width= 0.24\textwidth]{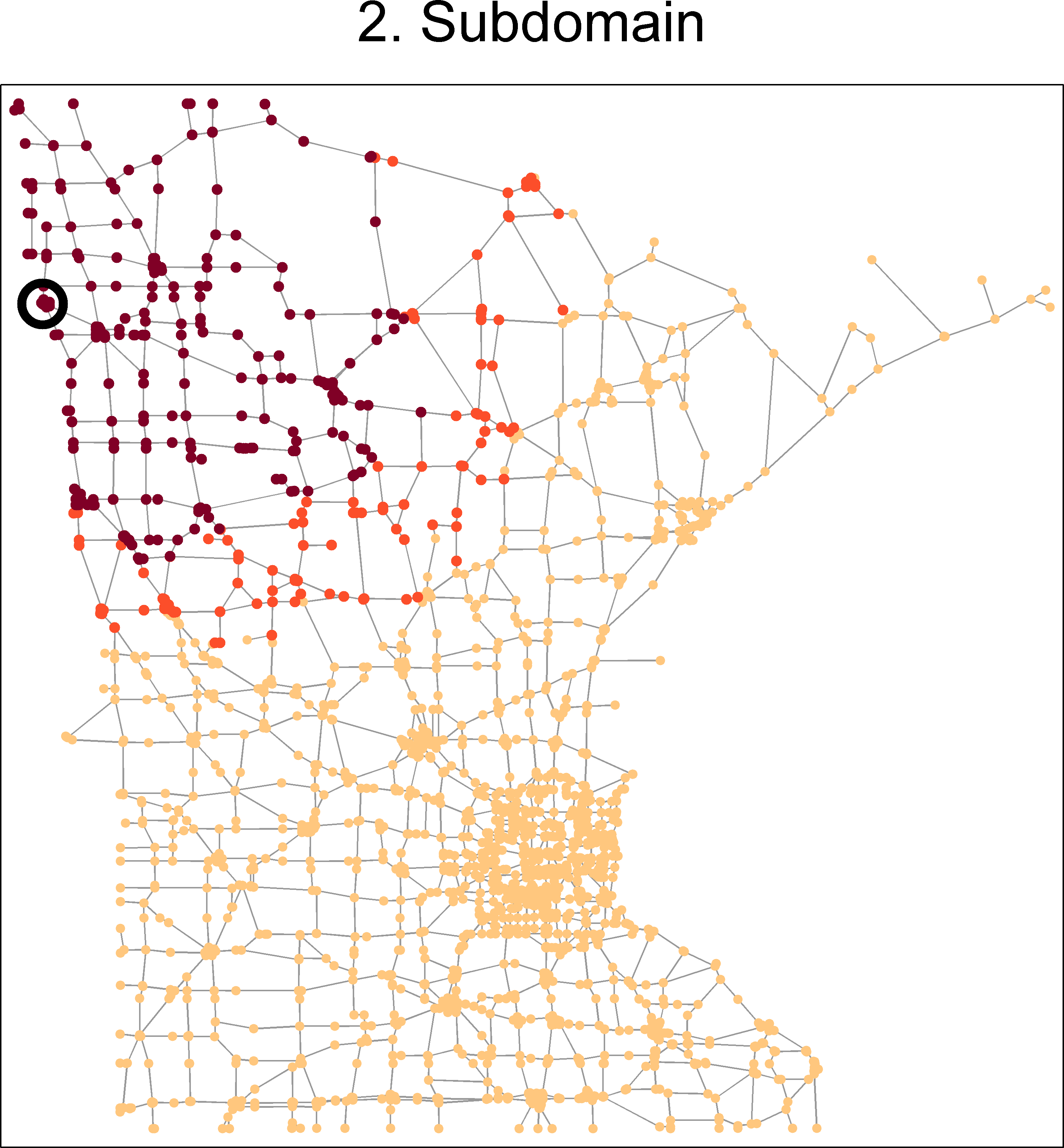}
    \includegraphics[width= 0.24\textwidth]{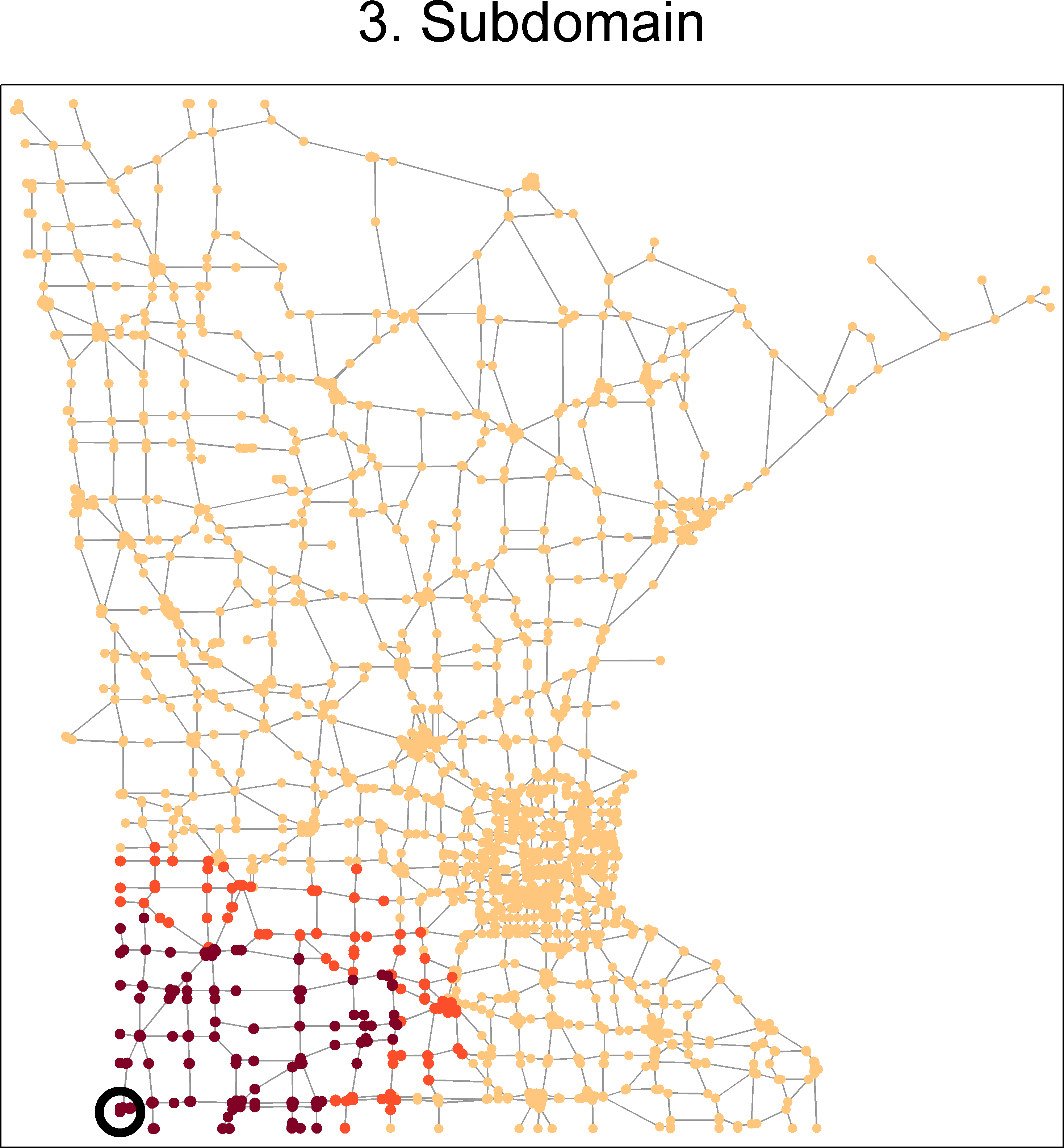}
    \includegraphics[width= 0.24\textwidth]{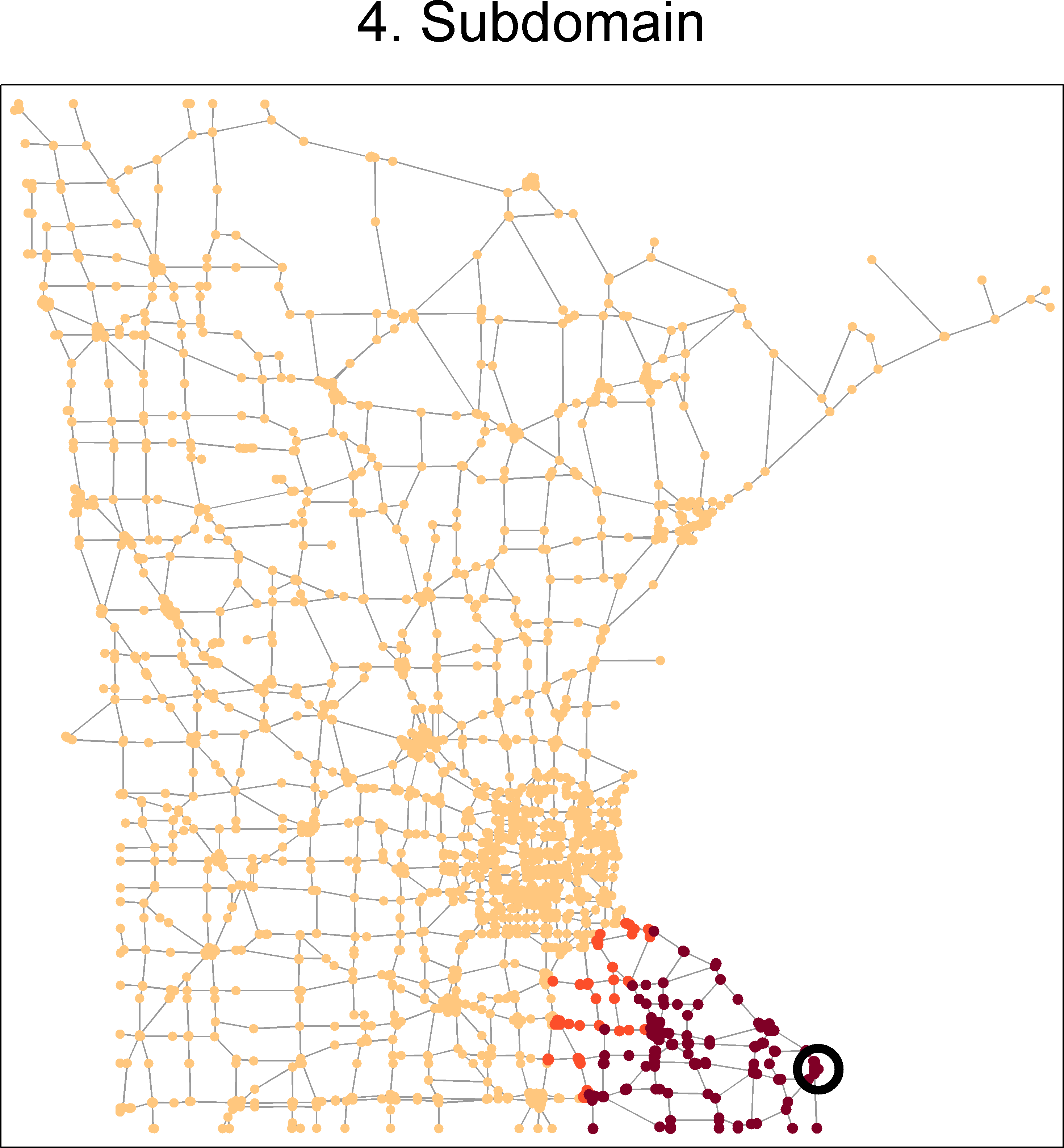} \\[2mm]
    \includegraphics[width= 0.24\textwidth]{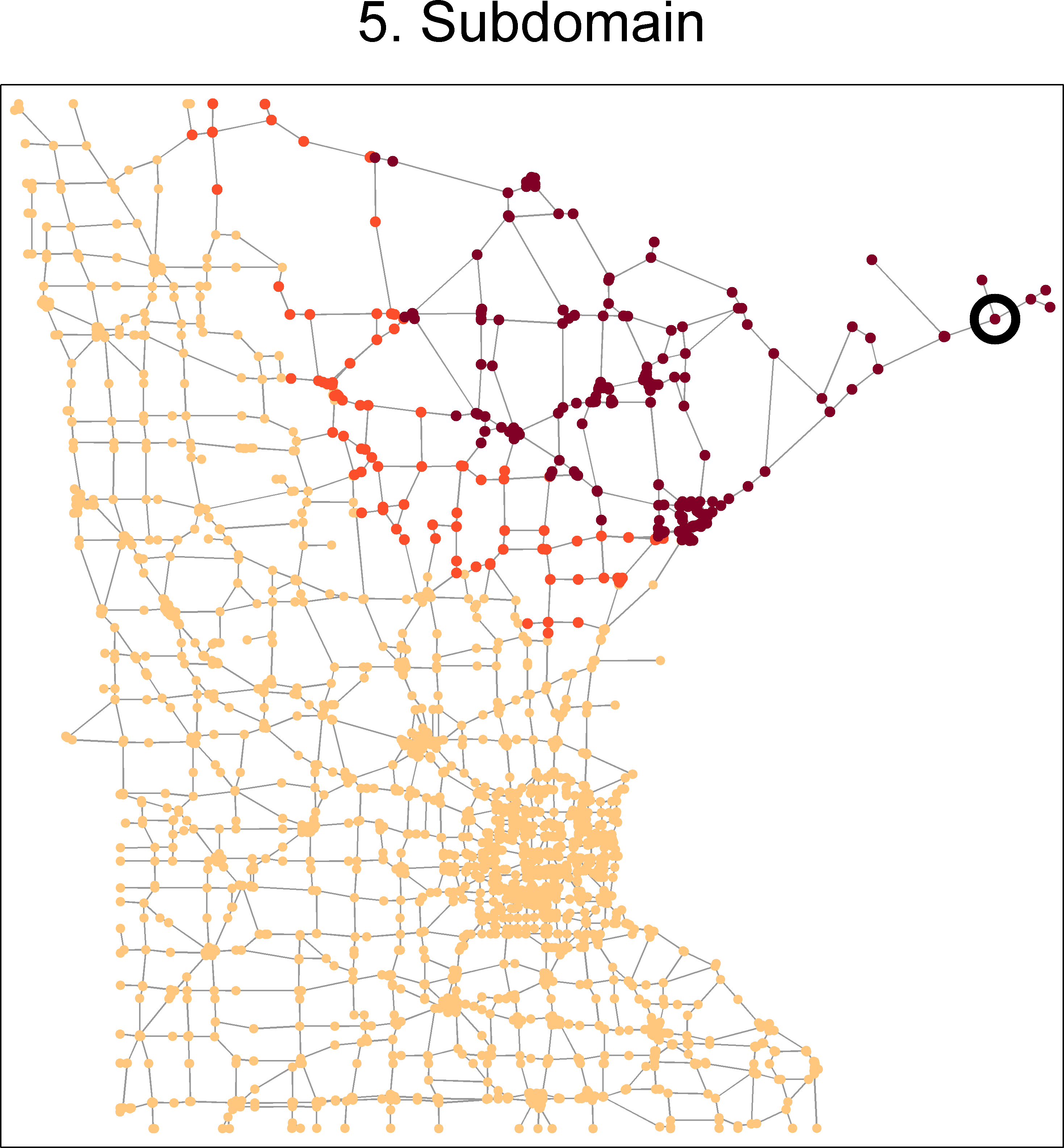}
    \includegraphics[width= 0.24\textwidth]{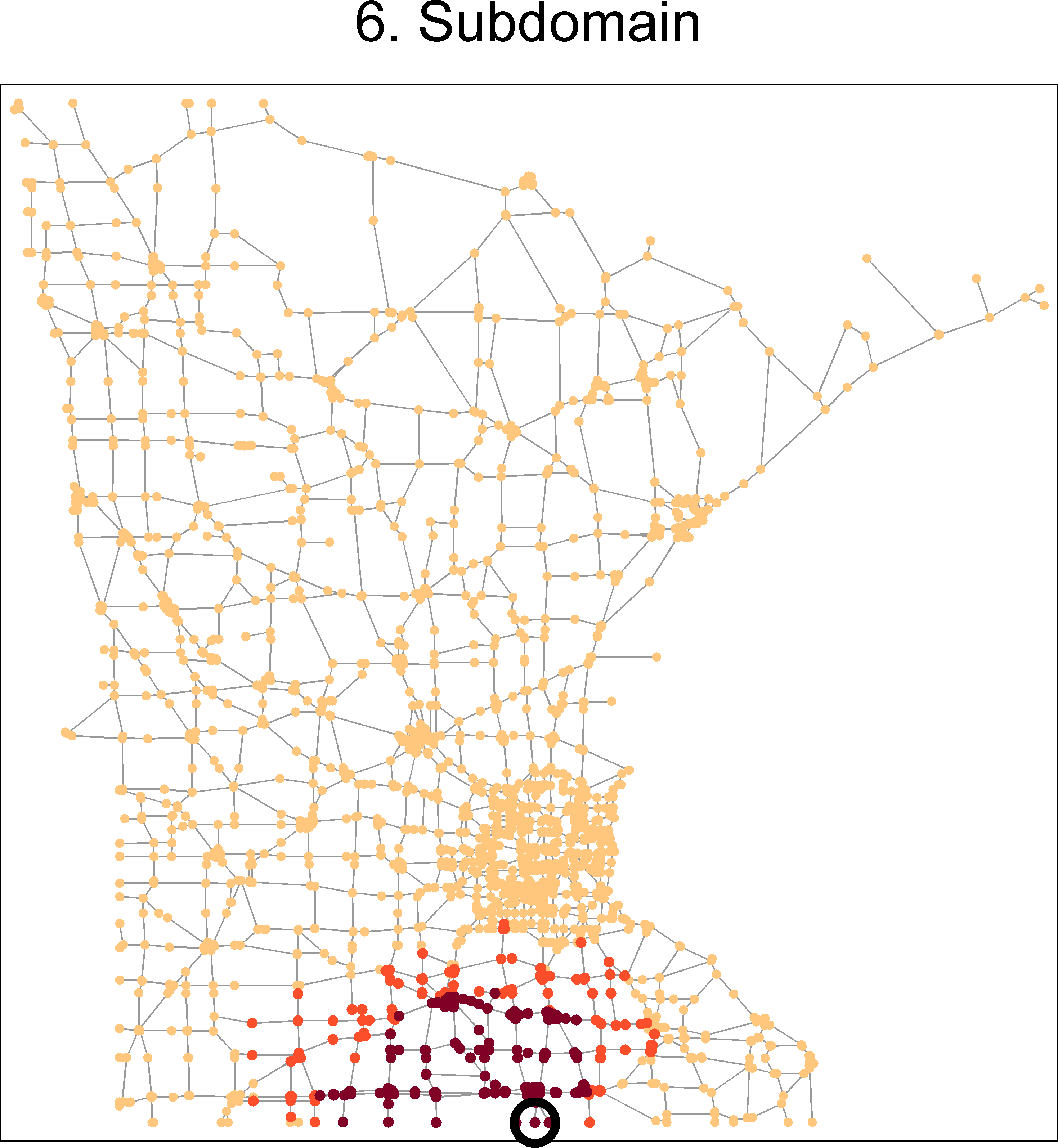}
    \includegraphics[width= 0.24\textwidth]{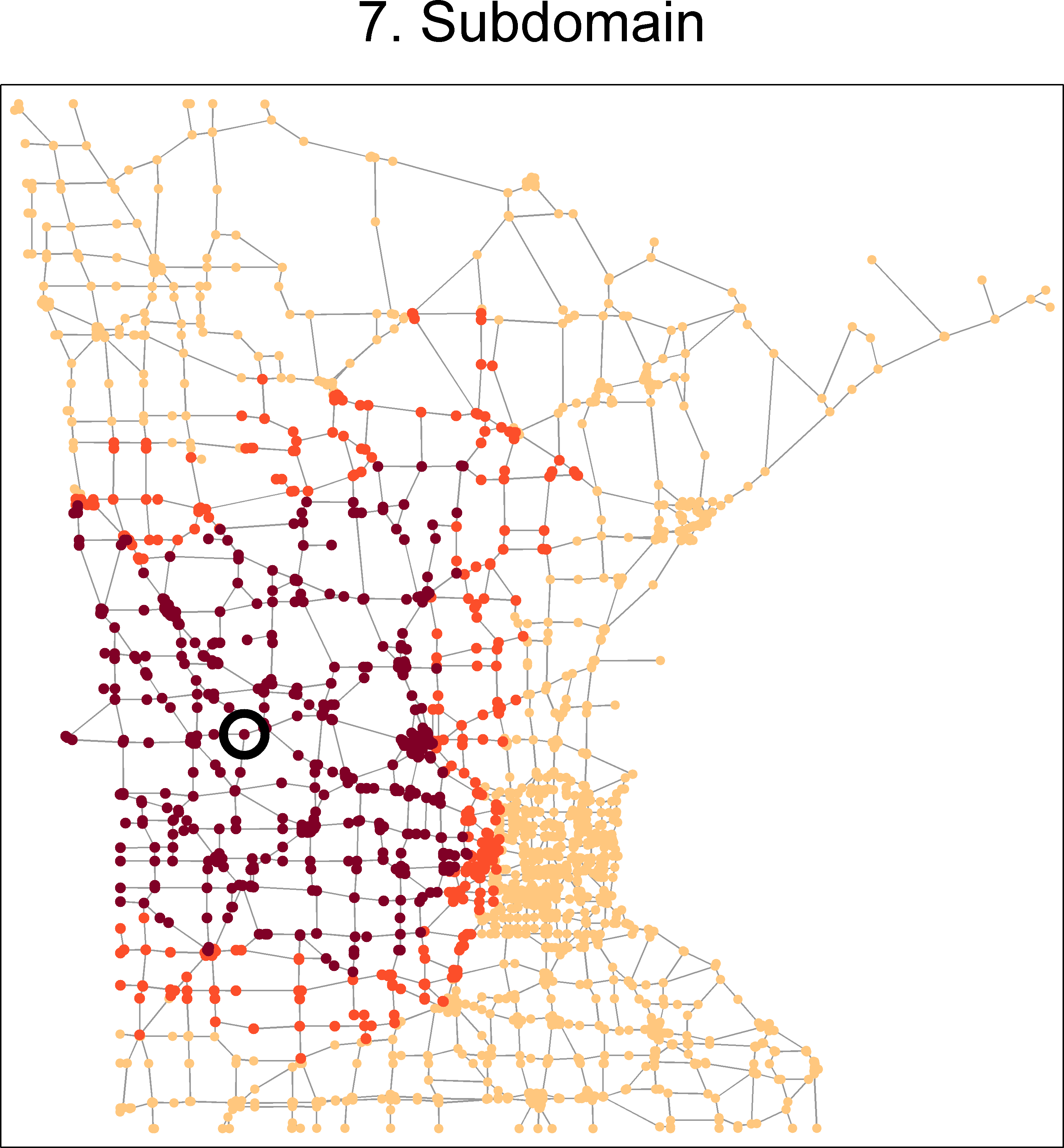}
    \includegraphics[width= 0.24\textwidth]{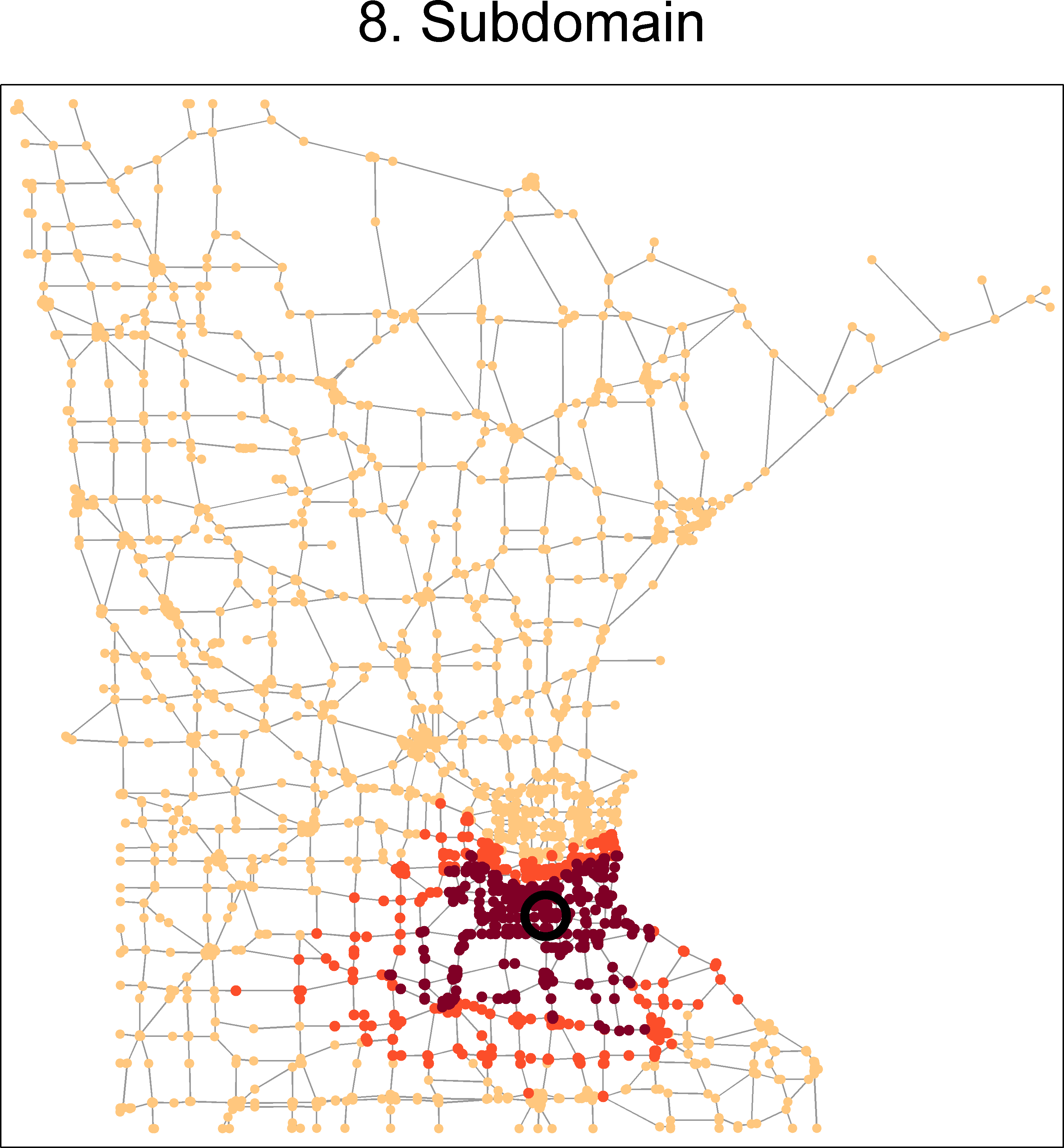} 
	\caption{Partition of the Minnesota graph in $J=8$ subdomains via reduced $J$-center clustering (dark domains) and domain augmentation (red nodes, $r = 8$). The ringed nodes indicate the centers of the subdomains.}
	\label{fig:graphPartitionMinnesota}
\end{figure}

\newpage

\subsection{Generating the partition of unity} \label{sec:genPUM}

Once a cover $\{ V_j \}_{j=1}^J$ of the vertex set $V$ is determined, our next objective is to construct a partition of unity $\{ \varphi^{(j)}\}_{j = 1}^J$ subordinate to this cover, i.e., a set of functions $\varphi^{(j)} \in \mathcal{L}(G)$, $j \in \{1, \ldots, J\}$, such that
\[\mathrm{supp} (\varphi^{(j)}) \subseteq V_j, \quad \varphi^{(j)} \geq 0, \quad \text{and} \quad
\sum_{j = 1}^J \varphi^{(j)}(\node{v}) = 1 \quad \text{for all $\node{v} \in V$}.\]
A simple construction principle to generate such a set of functions is based on Shepard weights \cite{shepard1968}: if $\psi^{(j)}$, $j \in \{1, \ldots, J\}$, denote non-negative weight functions on $V$ with $\mathrm{supp} (\psi^{(j)}) \subseteq V_j$ and $\sum_{j=1}^J \psi^{(j)} > 0$, then the functions
\[ \varphi^{(j)}(\node{v}) = \frac{\psi^{(j)}(\node{v})}{\sum_{j=1}^J \psi^{(j)}(\node{v})}, \quad \node{v} \in V,\]
form a partition of unity subordinate to $\{ V_j \}_{j=1}^J$.

\begin{exa} \label{example:1}
\begin{enumerate}[label=(\roman*)]
\item Choosing $\psi^{(j)}(\node{v}) = \1_{V_j} (\node{v})$, we get the partition of 
unity
\[\varphi^{(j)}(\node{v}) = \frac{\1_{V_j} (\node{v})}{\# \{j \in \{1, \ldots, J\} \,| \, \node{v} \in V_j\}}.\]
\item We consider the disjoint clusters $\node{C}_j$, $j \in \{1, \ldots, J\}$, from Algorithm \ref{alg:greedyjcenter} as subsets of the augmented domains $V_j$. Then, 
choosing $\psi^{(j)}(\node{v}) = \1_{\node{C}_j} (\node{v})$, the corresponding partition of unity is given as
\[\varphi^{(j)}(\node{v}) = \psi^{(j)}(\node{v}) = \1_{\node{C}_j} (\node{v}).\]
\end{enumerate}
\end{exa}

\section{A partition of unity method for GBF approximation on graphs} \label{sec:four}

\begin{center}
\begin{algorithm}[htbp] 

\caption{GBF-PUM approximation on graphs} \label{algorithm1}

\small

\vspace{3mm}

\KwIn{$(i)$ Samples $x(\node{w}_1), \ldots, x(\node{w}_N) \in \Rr$ at the interpolation nodes \newline \phantom{$(i)$} $W = \{\node{w}_1, \ldots, \node{w}_N\}$, \newline
$(ii)$ Number $J$ of subdomains.  
}

\vspace{1mm}

\noindent \textbf{Metric $J$-center clustering on $G$} \newline 
Apply Algorithm \ref{alg:greedyjcenter} to decompose $V$ into $J$ disjoint clusters $\node{C}_j$.

\vspace{1mm}

\noindent \textbf{Create $J$ subdomains $V_j$} \newline
\quad Apply Algorithm \ref{alg:enlargedomains} to augment the clusters $\node{C}_j$ into $J$ overlapping subdomains $V_j = \{\node{v}_{j_1}, \ldots, \node{v}_{j_{n_j}}\}$ with $n_j$ elements.  

\vspace{1mm}

\noindent \textbf{Generate Partition of Unity} \quad
Subordinate to the cover $\{ V_j \}_{j=1}^J$ of $V$ construct a partition of unity $\{ \varphi^{(j)}\}_{j = 1}^J$ such that 

\vspace{-1mm}

\[\mathrm{supp} (\varphi^{(j)}) \subseteq V_j, \; \varphi^{(j)} \geq 0, \; \text{and} \;
\sum_{j = 1}^J \varphi^{(j)}(\node{v}) = 1 \; \text{for all $\node{v} \in V$}.\]

\vspace{-1mm}

\noindent \textbf{Extract local graphs and local Laplacians $\Ll^{(j)}$} \,
For all subdomains $V_j$, $j \in \{1,\ldots, J\}$, generate the subgraphs $G_j = (V_j,E_j,\Ll^{(j)},\mathrm{d})$, with the node sets $V_j$, the edges $E_j = \{(\node{v},\node{w}) \in E \ | \ \node{v},\node{w} \in V_j \}$ and the local Laplacian 

\vspace{-1mm}

\[\Ll^{(j)} = \begin{pmatrix} \Ll_{\subind{j_1}{j_1}} & \cdots & \Ll_{\subind{j_1}{j_{n_j}}} \\
\vdots & \ddots & \vdots \\
\Ll_{\subind{j_{n_j}}{j_1}} & \cdots & \Ll_{\subind{j_{n_j}}{j_{n_j}}} 
           \end{pmatrix}. \]
If required, calculate the spectral decomposition $\Ll^{(j)} = \Uu^{(j)}\Mm_{\lambda^{(j)}} \Uu^{(j) \intercal}$ to define a local graph Fourier transform on the subgraphs $G_j$.

\vspace{1mm}

\noindent \textbf{Construct} the local GBF kernel 
$K_{f^{(j)}}(\node{v},\node{w}) = \Ccc_{\delta_{\node{w}}} f^{(j)}(\node{v})$ on the local graph $G_j$. One possibility is to use the variational spline kernel 

\vspace{-1mm}

\[\Kk_{f^{(j)}} = (\epsilon \mathbf{I}_n + \mathbf{L}^{(j)})^{-s} = \sum_{k=1}^{n_j} \frac{1}{(\epsilon + \lambda_k^{(j)})^s} u_k^{(j)} u_k^{(j)\intercal}, \quad \epsilon > - \lambda_1^{(j)}, \; s > 0.\]

For the sampling sets $W_j = W \cap V_j$ with $N_j \geq 1$ elements, and $\gamma \geq 0$, \textbf{solve} 

\vspace{-1mm}

\begin{equation*} \label{eq:computationcoefficientsGBF} 
 \big( \mathbf{K}_{f^{(j)},W_j} + \gamma N_j \mathbf{I}_{N_j} \big) 
\begin{pmatrix} \node{c}_1^{(j)} \\ \vdots \\ c_{N_j}^{(j)} \end{pmatrix}
= \begin{pmatrix} x(\node{w}_{j_1}) \\ \vdots \\ x(\node{w}_{j_{N_j}}) \end{pmatrix}.
\end{equation*}

\textbf{Calculate} the local GBF approximation $x_*^{(j)}$ on $G_j$:

\vspace{-1mm}

\[ x_*^{(j)}(\node{v}) = \sum_{i=1}^{N_j} c_i^{(j)} \mathbf{C}_{\node{w}_{j_i}} f^{(j)}(\node{v}).\]

\noindent A \textbf{global GBF-PUM approximation} on $G$ is then given as 

\vspace{-1mm}

\[ x_*(\node{v}) = \sum_{j=1}^J \varphi^{(j)}(\node{v}) x_*^{(j)}(\node{v}). \]
\end{algorithm}
\end{center}

In this section, we show how a PUM can be implemented in order to obtain a global approximation on the graph from local GBF approximations on subgraphs $G_j$. We will mainly focus on the construction and the computational steps to obtain the global approximation, the local approximation procedure is already outlined in Section \ref{sec:GBF}. The main computational steps, elaborated formally in 
Algorithm \ref{algorithm1}, are the following:

\begin{enumerate}
\item In a preliminary step, a cover $\{ V_j \}_{j=1}^J$ of the node set $V$ and a partition of unity $\{ \varphi^{(j)}\}_{j = 1}^J$ subordinate to $\{ V_j \}_{j=1}^J$ is constructed. In particular, the following operations are performed:
\begin{itemize}
\item[a)] Algorithm \ref{alg:greedyjcenter} is applied to decompose $V$ into $J$ disjoint clusters $\node{C}_j$.
\item[b)] Algorithm \ref{alg:enlargedomains} is applied to augment the clusters $\node{C}_j$ with neighboring nodes and obtain a cover of overlapping subdomains $V_j$, $j \in \{1, \ldots, J\}$.
\item[c)] As described in Section \ref{sec:genPUM}, Shepard weight functions are introduced on the cover $\{ V_j \}_{j=1}^J$ to obtain a desired partition of unity $\{ \varphi^{(j)}\}_{j = 1}^J$.
\end{itemize}
\item Once the cover $\{ V_j \}_{j=1}^J$ is determined, we extract in a second preliminary step the subgraphs $G_j = (V_j,E_j,\Ll^{(j)},\mathrm{d})$, where $V_j$ is the set of vertices, 
$E_j = \{ e_{\subind{i}{i'}} \in E \ | \ \node{v}_i,\node{v}_{i'} \in V_j \}$ denotes the set of edges connecting two nodes in $V_j$, and the local Laplacian $\Ll^{(j)}$ is the principal submatrix of $\Ll$ corresponding to the nodes of $V_j$. Further, the metric $\mathrm{d}$ on $G_j$ is the induced metric. The $J$-center clustering in Algorithm \ref{alg:greedyjcenter} and the augmentation procedure in Algorithm \ref{alg:enlargedomains} ensure that the subgraphs $G_j$ are connected. 
\item The main computational step consists in the calculation of the local approximations $x_*^{(j)}$ on the subgraphs $G_j$. As outlined in Section \ref{sec:GBF}, this can be done by using a GBF approximation scheme. To this end, on each subgraph a positive definite GBF $f^{(j)}$ has to be chosen. As sampling information for the calculation of the approximation, the samples of the signal $x$ on the subset $W_j = W \cap V_j$ are taken. The modified $J$-center clustering in Algorithm \ref{alg:greedyjcenter} ensures that $W_j \neq \emptyset$. 
\item Once the local approximations $x_*^{(j)}$ are computed, the global GBF-PUM approximation on $G$ is determined by 
\[ x_*(\node{v}) = \sum_{j=1}^J \varphi^{(j)}(\node{v}) x_*^{(j)}(\node{v}). \]
In a similar way, a global interpolating function $x_{\circ}(\node{v})$ can be calculated by using local interpolating functions $x_{\circ}^{(j)}$ instead of the approximants $x_*^{(j)}$. Finally, as described in \cite{erb2020}, the approximant $x_*$ can be processed further by selection functions in order to obtain, for instance, a global classification algorithm for the nodes of the graph $G$.  
\end{enumerate}

\section{From local to global error estimates in PUM approximation} \label{sec:theory}

A partition of unity induces naturally the following approximation spaces:

\begin{dfn}
Let $\{ V_j \}_{j=1}^J$ be a cover of the set $V$ and $\{ \varphi^{(j)}\}_{j = 1}^J$ a subordinate partition of unity. On each of the subgraphs $G_j$, let $\mathcal{N}^{(j)} \subset \mathcal{L}(G_j)$ be a given local approximation space. Then, we define the global partition of unity approximation space $\mathcal{N}^{\mathrm{PUM}}$ as 
\[\mathcal{N}^{\mathrm{PUM}} = \left\{x \in \mathcal{L}(G) \ \Big| \ x = 
\sum_{j=1}^J \varphi^{(j)} x^{(j)}, \; x^{(j)} \in \mathcal{N}^{(j)}   \right\}.\]
\end{dfn}

In the following, our goal is to derive approximation properties of the spaces $\mathcal{N}^{\mathrm{PUM}}$ from given local estimates on the subgraphs $G_j$. As a measure for the smoothness of a function $x \in \mathcal{L}(G)$, we will consider norms of the Laplacian $\Ll$ applied to $x$ or, alternatively, the gradient $\nabla_{\Ll} x$. 
The weighted gradient $\nabla_{\Ll} x$ is defined as a function on the (directed) edges $e_{\subind{i}{i'}} = (\node{v}_i,\node{v}_{i'}) \in E$ of the graph $G$ as
\begin{equation} \label{eq:defgradient}
\nabla_{\Ll} x (e_{\subind{i}{i'}}) = \sqrt{\mathbf{A}_{\subind{i}{i'}}} (x(\node{v}_i) - x(\node{v}_{i'})).
\end{equation}
Based on this definition of the gradient, we can decompose the standard graph Laplacian $\Ll_S$ as
\[x^T \Ll_S x = \sum_{i = 1}^n \sum_{i': \;e_{\subind{i}{i'}}  \in E}
\Aa_{\subind{i}{i'}} \big(x(\node{v}_i) - x(\node{v}_{i'})\big)^2 = (\nabla_{\Ll} x)^T \nabla_{\Ll} x,\]
showing, in particular, that the standard Laplacian $\Ll_S$ is positive semi-definite.
As possible norms for functions $x \in \mathcal{L}(G)$ on the graph $G$ and for functions $z \in \mathcal{L}(E)$ on the edges of $G$, we consider for $1 \leq p \leq \infty$, the usual $\mathcal{L}^p$-norms given by
\begin{align*}
\|x\|_{\normp{G}} &:= \left\{ \begin{array}{ll} \ds \left( \sum_{i = 1}^n |x(\node{v}_i)|^p\right)^{1/p} & \text{for $1 \leq p < \infty$,} \\ 
\ds \max_{i \in \{1, \ldots, n\}} |x(\node{v}_i)| & \text{for $p = \infty$,}\end{array} \right. \\
\|z\|_{\normp{E}} &:= \left\{ \begin{array}{ll} \ds \left( \sum_{i,i': \, e_{\subind{i}{i'}} \in E} |z(e_{\subind{i}{i'}})|^p\right)^{1/p} & \text{for $1 \leq p < \infty$,} \\ 
\ds \max_{i,i': \, e_{\subind{i}{i'}} \in E} |x(e_{\subind{i}{i'}})| & \text{for $p = \infty$.}\end{array} \right.
\end{align*}

In addition, we require for functions $z \in \mathcal{L}(E)$ and $1 \leq p < \infty$ also the hybrid norms $\|z\|_{\norminfp{E}}$ given by
\[\|z\|_{\norminfp{E}} := \max_{i \in \{1, \ldots, n\}} \left(\sum_{i': \, e_{\subind{i}{i'}} \in E } |z(e_{\subind{i}{i'}})|^p\right)^{1/p}.\]
For $p = \infty$, a corresponding definition of this hybrid norm reduces to the already introduced $\mathcal{L}^{\infty}$-norm $\|z\|_{\norminf{E}}$.   

Our main result reads as follows: 
\begin{thm} \label{thm:globalerrorestimate}
Let $\{ V_j \}_{j=1}^J$ be a cover of $V$ and $\{ \varphi^{(j)}\}_{j = 1}^J$ a subordinate partition of unity. For a signal $x \in \mathcal{L}(G)$ we assume that there exist local approximations $x_{*}^{(j)} \in \mathcal{L}(G_j)$ such that, 
for given $1 \leq p \leq \infty$, either the bound $(i)$, the bounds $(i)$, $(ii)$, or the bounds $(i), (ii)$, $(iii)$ are satisfied, where
\[ \begin{array}{lrl}
(i) & \|x - x_{*}^{(j)}\|_{\normp{G_j}} &\leq \error_{p,0}^{(j)}, \\
(ii) & \|\nabla_{\Ll^{(j)}} x - \nabla_{\Ll^{(j)}} x_{*}^{(j)}\|_{\normp{E_j}} &\leq \error_{p,1}^{(j)}, \\
(iii) & \| \Ll^{(j)} x - \Ll^{(j)} x_{*}^{(j)}\|_{\normp{G_j}} &\leq \error_{p,2}^{(j)}.
\end{array}
\]
If $(ii)$ or $(iii)$ are satisfied, we further assume that the partition of unity $\{ \varphi^{(j)}\}_{j = 1}^J$ fulfills the boundary conditions 
\[ (\mathrm{BC}) \quad \nabla_{\Ll} \varphi^{(j)} (e_{\subind{i}{i'}}) = 0 \quad \text{for all edges $e_{\subind{i}{i'}} \in E$ with $\node{v}_i \in V^{(j)}$ and $\node{v}_{i'} \notin V^{(j)}$.}\]
Then, the global error for the approximate signal $x_{*} = 
\sum_{j=1}^J \varphi^{(j)} x_{*}^{(j)} \in \mathcal{N}^{\mathrm{PUM}}$ is bounded by
\[ \begin{array}{lrl}
(i)' & \|x - x_{*}\|_{\normp{G}} &\leq \ds \sum_{j=1}^J \error_{p,0}^{(j)}, \\
(ii)' & \|\nabla_{\Ll} x\! - \!\nabla_{\Ll} x_{*}\|_{\normp{E}} &\leq \ds \sum_{j=1}^J  \left( \error_{p,1}^{(j)} + \|\nabla_{\Ll} \varphi^{(j)} \|_{\norminfp{E}} \,\error_{p,0}^{(j)}\right)\!, \\
(iii)' & \| \Ll x - \Ll x_{*}\|_{\normp{G}} &\leq \ds \sum_{j=1}^J \left(\error_{p,2}^{(j)} + \! \|\nabla_{\Ll} \varphi^{(j)} \|_{\norminfq{E}} \error_{p,1}^{(j)} + \! \|\Ll_S \varphi^{(j)} \|_{\norminf{G}} \error_{p,0}^{(j)} \right)\!,
\end{array} \]
respectively. In the last line, the number $1 \leq q \leq \infty$ is dual to $p$ and determined by $\frac{1}{p} + \frac{1}{q} = 1$.  
\end{thm}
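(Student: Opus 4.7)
The overall plan is to exploit the partition of unity identity $\sum_{j=1}^J \varphi^{(j)} \equiv 1$, which yields the decomposition
\[ x - x_* = \sum_{j=1}^J \varphi^{(j)}\bigl(x - x_*^{(j)}\bigr). \]
All three bounds (i)'--(iii)' will then follow by estimating one summand at a time on its subdomain and applying the triangle inequality. Bound (i)' is immediate: since $0 \leq \varphi^{(j)} \leq 1$ and $\mathrm{supp}(\varphi^{(j)}) \subseteq V_j$, one gets $\|\varphi^{(j)}(x - x_*^{(j)})\|_{\normp{G}} \leq \|x - x_*^{(j)}\|_{\normp{G_j}} \leq \error_{p,0}^{(j)}$, and no boundary condition is needed because $\varphi^{(j)}$ already annihilates the (undefined) values of $x_*^{(j)}$ off $V_j$.

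For (ii)' the technical heart is a discrete product rule
\[ \nabla_{\Ll}(fg)(e_{\subind{i}{i'}}) = f(\node{v}_i)\,\nabla_{\Ll} g(e_{\subind{i}{i'}}) + g(\node{v}_{i'})\,\nabla_{\Ll} f(e_{\subind{i}{i'}}). \]
Applied with $f = \varphi^{(j)}$ and $g = x - x_*^{(j)}$, the first step is to show that $(\mathrm{BC})$ forces $\varphi^{(j)}(\node{v}_i) = 0$ at every $\node{v}_i \in V_j$ adjacent to $V \setminus V_j$ (otherwise $\nabla_{\Ll}\varphi^{(j)}$ would fail to vanish on a boundary edge), so that both $\varphi^{(j)}$ and $\nabla_{\Ll}\varphi^{(j)}$ are effectively supported in $V_j$ and $E_j$, respectively. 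This makes $\nabla_{\Ll}(\varphi^{(j)}(x-x_*^{(j)}))$ supported on $E_j$ and eliminates any dependence on $x_*^{(j)}$ outside $V_j$. On $E_j$, using $|\varphi^{(j)}|\leq 1$ produces the local gradient error plus a term $|g(\node{v}_{i'})|\,|\nabla_{\Ll}\varphi^{(j)}(e_{\subind{i}{i'}})|$; regrouping the $\ell^p$-sum by the endpoint $\node{v}_{i'}$ and pulling out the max-over-vertex of the inner edge-sum gives exactly the factor $\|\nabla_{\Ll}\varphi^{(j)}\|_{\norminfp{E}}$ of the claim.

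For (iii)' the analogous tool is a discrete Leibniz rule for the general Laplacian $\Ll$:
\[ \Ll(fg)(\node{v}_i) = f(\node{v}_i)\,\Ll g(\node{v}_i) + g(\node{v}_i)\,\Ll_S f(\node{v}_i) - \sum_{i':\, e_{\subind{i}{i'}} \in E} \nabla_{\Ll} f(e_{\subind{i}{i'}})\,\nabla_{\Ll} g(e_{\subind{i}{i'}}). \]
I would derive this by a direct expansion of $\Ll(fg)(\node{v}_i) = \Ll_{\subind{i}{i}} f(\node{v}_i)g(\node{v}_i) - \sum_{i' \sim i}\mathbf{A}_{\subind{i}{i'}} f(\node{v}_{i'})g(\node{v}_{i'})$ combined with the algebraic telescope $(f_i - f_{i'})(g_i - g_{i'})$. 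The key observation is that although $\Ll$ sits on the left, the diagonal entry $\Ll_{\subind{i}{i}}$ acts at a single vertex and drops out of the cross-term, so the standard Laplacian $\Ll_S$ (not $\Ll$) appears in front of $g$. Once $(\mathrm{BC})$ is used exactly as in (ii)' to show that all three summands are supported in $V_j$, taking the $\normp{G}$-norm and applying Hölder's inequality with $\tfrac{1}{p}+\tfrac{1}{q}=1$ at each vertex to the edge cross-term produces the three contributions $\error_{p,2}^{(j)}$, $\|\nabla_{\Ll}\varphi^{(j)}\|_{\norminfq{E}}\,\error_{p,1}^{(j)}$, and $\|\Ll_S\varphi^{(j)}\|_{\norminf{G}}\,\error_{p,0}^{(j)}$.

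The main obstacle I anticipate is the Leibniz computation for a \emph{general} Laplacian $\Ll$, and the recognition that $\Ll_S$ rather than $\Ll$ must appear in the second term; this is precisely what makes the global bound insensitive to the (otherwise arbitrary) diagonal degrees of freedom of $\Ll$. A secondary subtlety is verifying that $(\mathrm{BC})$ is the correct vanishing condition to annihilate every boundary-vertex and boundary-edge contribution, so that the local error quantities $\error_{p,k}^{(j)}$, defined only on the subgraph $G_j$, suffice to control the global error.
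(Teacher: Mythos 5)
Your proposal is correct and follows essentially the same route as the paper: the same decomposition $x - x_* = \sum_{j}\varphi^{(j)}(x - x_*^{(j)})$, the same two discrete product rules (your Leibniz formula for $\Ll(fg)$ with the $\Ll_S$ term and the $\nabla_{\Ll}f\,\nabla_{\Ll}g$ cross-term is algebraically identical to the paper's Lemma \ref{lem:1}), the same use of $(\mathrm{BC})$ to localize all supports to $V_j$ and $E_j$, and the same H\"older/triangle-inequality estimates that constitute the paper's Lemma \ref{lem:2}. Your explicit observation that $(\mathrm{BC})$ forces $\varphi^{(j)}$ to vanish on the inner boundary vertices of $V_j$ is a correct and slightly more transparent phrasing of what the paper uses implicitly.
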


\begin{rem} According to Theorem \ref{thm:globalerrorestimate}, we can expect small global errors if the local errors are small and if the partition of unity is properly constructed. The latter includes that the functions $\varphi^{(j)}$ satisfy the boundary conditions $(\mathrm{BC})$, and that the norms $\|\nabla_{\Ll} \varphi^{(j)} \|_{\norminfp{E}}$ and $\|\Ll_S \varphi^{(j)} \|_{\norminf{G}}$ are small. In analogy to continuous frameworks, this indicates that the functions $\varphi^{(j)}$ of the partition should decay smoothly and vanish at the boundary of the subdomains $V_j$ in order to guarantee small global errors. 

For the boundary condition $(\mathrm{BC})$ to be satisfied, it is necessary that the subdomains $V_j$ of the cover overlap. A partition $\{ \varphi^{(j)}\}_{j = 1}^J$ satisfying $(\mathrm{BC})$ is, for instance, provided in Example \ref{example:1} $(ii)$ as soon as the enlargement distance $r$ in Algorithm \ref{alg:enlargedomains} is chosen large enough. The importance of this vanishing boundary condition and the enlargement parameter $r$ for the global error is further illustrated in the numerical experiments of Section \ref{sec:numerics}. 
\end{rem}

\begin{rem} Obviously, the local errors $\error_{p,0}^{(j)}$, $\error_{p,1}^{(j)}$ and $\error_{p,2}^{(j)}$ depend on the local approximation schemes used for the single subgraphs $G_j$. If, for instance, variational spline kernels are used to approximate signals on $G_j$, the estimates derived in \cite{Pesenson2009} can be used to bound the local errors. For a general GBF interpolation scheme applied to signals on the subgraphs, error bounds are provided in \cite{erb2019b}.
\end{rem}

In order to prove Theorem \ref{thm:globalerrorestimate}, we require a product rule for the graph Laplacian $\Ll$ and the gradient $\nabla_{\Ll}$ when these are applied to the pointwise product $xy$ of two graph signals.

\begin{lem} \label{lem:1} For two signals $x,y \in \mathcal{L}(G)$, we have the following product rules for the gradient $\nabla_{\Ll}$ and the graph Laplacian $\Ll$:
\leavevmode
\begin{enumerate}[label=(\roman*)]
\item \[\nabla_{\Ll}(x y)(e_{\subind{i}{i'}}) = \nabla_{\Ll} x (e_{\subind{i}{i'}}) y(\node{v}_i) + x(\node{v}_{i'}) \nabla_{\Ll} y (e_{\subind{i}{i'}})\]
\item  
\begin{align*} \Ll( x \, y) (\node{v}_i) &= \Ll x (\node{v}_i) y (\node{v}_i) + x (\node{v}_i) \Ll_S y (\node{v}_i) \\
& \quad + \sum_{i': \; e_{\subind{i}{i'}} \in E} \Ll_{\subind{i}{i'}} 
\big(x(\node{v}_i) - x(\node{v}_{i'})\big)\big(y (\node{v}_i) - y (\node{v}_{i'})\big).
\end{align*}
\end{enumerate}
\end{lem}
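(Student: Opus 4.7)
The plan is to prove both identities by direct manipulation of the definitions, splitting each product-difference by a Leibniz-type identity and then regrouping the resulting sums. The only subtlety is that the diagonal entries $\Ll_{\subind{i}{i}}$ in \eqref{eq:generalizedLaplacian} are unconstrained, so for part (ii) some care is needed to reconcile the asymmetric occurrence of $\Ll$ acting on $x$ with $\Ll_S$ acting on $y$ in the statement.

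For part (i), I would insert the definition \eqref{eq:defgradient} on the left and apply the elementary telescoping identity
\[ab - cd = (a-c)\,b + c\,(b-d)\]
to $x(\node{v}_i)y(\node{v}_i) - x(\node{v}_{i'})y(\node{v}_{i'})$; pulling $\sqrt{\Aa_{\subind{i}{i'}}}$ back into each of the two summands immediately yields $\nabla_{\Ll} x(e_{\subind{i}{i'}})\,y(\node{v}_i) + x(\node{v}_{i'})\,\nabla_{\Ll} y(e_{\subind{i}{i'}})$. The asymmetric appearance of $y(\node{v}_i)$ versus $x(\node{v}_{i'})$ on the right merely reflects the direction in which the telescoping is performed.

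For part (ii), my strategy is first to prove the \emph{symmetric} product rule for the standard Laplacian $\Ll_S = \Dd - \Aa$, and then to lift it to a general $\Ll$. Starting from
\[\Ll_S(xy)(\node{v}_i) = \sum_{i':\, e_{\subind{i}{i'}} \in E} \Aa_{\subind{i}{i'}} \bigl(x(\node{v}_i)y(\node{v}_i) - x(\node{v}_{i'})y(\node{v}_{i'})\bigr),\]
I would apply the identity of part (i) inside the sum and then rewrite $x(\node{v}_{i'}) = x(\node{v}_i) - (x(\node{v}_i) - x(\node{v}_{i'}))$ in the remaining factor, using $-\Aa_{\subind{i}{i'}} = \Ll_{\subind{i}{i'}}$ to obtain the symmetric identity
\[\Ll_S(xy)(\node{v}_i) = y(\node{v}_i)\Ll_S x(\node{v}_i) + x(\node{v}_i)\Ll_S y(\node{v}_i) + \sum_{i':\, e_{\subind{i}{i'}} \in E} \Ll_{\subind{i}{i'}}\bigl(x(\node{v}_i)-x(\node{v}_{i'})\bigr)\bigl(y(\node{v}_i)-y(\node{v}_{i'})\bigr).\]

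To pass from $\Ll_S$ to a general $\Ll$, I would use that $\Ll$ and $\Ll_S$ coincide off the diagonal, so $\Ll = \Ll_S + V$ with $V$ the diagonal matrix of entries $V_{\subind{i}{i}} = \Ll_{\subind{i}{i}} - \Dd_{\subind{i}{i}}$. Since $V$ is a pointwise multiplier, $V(xy)(\node{v}_i) = V_{\subind{i}{i}} x(\node{v}_i)y(\node{v}_i)$; adding this term to the symmetric identity above and combining it with $y(\node{v}_i)\Ll_S x(\node{v}_i)$ absorbs the potential into $y(\node{v}_i)\Ll x(\node{v}_i)$, leaving the other two terms untouched and yielding exactly the statement in (ii). The main (mild) obstacle is precisely this bookkeeping with the unconstrained diagonal; the apparent asymmetry between $\Ll$ and $\Ll_S$ in the conclusion is merely a presentational choice, since the same argument would equally produce $\Ll_S x\cdot y + x\cdot \Ll y$ by absorbing $V_{\subind{i}{i}}$ into the $y$-factor instead.
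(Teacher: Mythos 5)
Your proof is correct; part (i) is exactly the paper's argument (insert the cross term $x(\node{v}_{i'})y(\node{v}_i)$ and split the difference), while for part (ii) you take a genuinely different, though equally elementary, route. The paper works with the general $\Ll$ from the outset: it peels off $\Ll x(\node{v}_i)\,y(\node{v}_i)$ in one step (which absorbs the unconstrained diagonal entry $\Ll_{\subind{i}{i}}$ immediately), leaving the off-diagonal sum $\sum_{i'}\Ll_{\subind{i}{i'}}x(\node{v}_{i'})\big(y(\node{v}_{i'})-y(\node{v}_i)\big)$, and then splits $x(\node{v}_{i'})$ into $x(\node{v}_i)$ plus a difference to produce the $x(\node{v}_i)\Ll_S y(\node{v}_i)$ term and the correlation sum. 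You instead first prove the symmetric product rule for $\Ll_S$ and then add the diagonal perturbation $\Ll-\Ll_S$ as a pointwise multiplier, absorbing it into the $y(\node{v}_i)\Ll x(\node{v}_i)$ factor; I checked the algebra and it closes correctly, and your closing observation that the same bookkeeping would yield the variant $\Ll_S x\cdot y + x\cdot\Ll y + \dots$ is also right. What your route buys is transparency: it isolates the role of the unconstrained diagonal and makes the $x\leftrightarrow y$ symmetry of the $\Ll_S$ case explicit, whereas the paper's direct computation is slightly more compact because the diagonal never has to be separated out.
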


\begin{proof}
$(i)$ The gradient $\nabla_{\Ll}$ defined in \eqref{eq:defgradient} and applied to the product $x y$ yields
\begin{align*} \nabla_{\Ll}( x \, y) (e_{\subind{i}{i'}}) 
&= (\Aa_{\subind{i}{i'}})^{\frac12} \big( x(\node{v}_i) y(\node{v}_i) - x(\node{v}_{i'}) y(\node{v}_{i'}) \big) \\
&= (\Aa_{\subind{i}{i'}})^{\frac12} \big( x(\node{v}_i) y(\node{v}_i) - x(\node{v}_{i'}\!) y(\node{v}_i) + x(\node{v}_{i'}\!) y(\node{v}_i)- x(\node{v}_{i'}\!) y(\node{v}_{i'}\!) \big) \\
&= \nabla_{\Ll} x (e_{\subind{i}{i'}}) y(\node{v}_i) + x(\node{v}_{i'}) \nabla_{\Ll} y (e_{\subind{i}{i'}}).
\end{align*}
$(ii)$ Similarly, the definition of the graph Laplacian $\Ll$ applied to the product $xy$ of the signal $x$ and $y$ gives
\begin{align} \notag \Ll( x \, y) (\node{v}_i) &= \Ll_{i,i} x(\node{v}_i) y(\node{v}_i) 
+  \sum_{i': \; e_{\subind{i}{i'}} \in E} \Ll_{\subind{i}{i'}} 
x(\node{v}_{i'}) y(\node{v}_{i'}) \\
&= \Ll x(\node{v}_i) y(\node{v}_i)
+ \sum_{i': \; e_{\subind{i}{i'}} \in E} \Ll_{\subind{i}{i'}}  
x(\node{v}_{i'}) \big( y(\node{v}_{i'}) -  y(\node{v}_i) \big). \label{eq:hh123}
\end{align}
As the entries of the standard graph Laplacian $\Ll_S$ are given as
\begin{equation*}    (\Ll_{S})_{\subind{i}{i'}} = 
  \begin{cases}
    \sum_{j=1}^n \Aa_{\subind{i}{j}}, & \text{if } i=i', \\
    \Ll_{\subind{i}{i'}} = - \Aa_{\subind{i}{i'}}, & \text{otherwise},
  \end{cases}
\end{equation*}
we further have the identity
\[ x(\node{v}_i) \Ll_{S} y(\node{v}_i) = \sum_{i': \; e_{\subind{i}{i'}} \in E} \Ll_{\subind{i}{i'}} 
x(\node{v}_i) \big( y(\node{v}_{i'}) -  y(\node{v}_i) \big).\]
Combining this identity with the formula in \eqref{eq:hh123}, we obtain
\begin{align*} \Ll( x \, y) (\node{v}_i) &= \Ll x (\node{v}_i) y (\node{v}_i) + x (\node{v}_i) \Ll_S y (\node{v}_i) \\
& \quad + \sum_{i': \; e_{\subind{i}{i'}} \in E} \Ll_{\subind{i}{i'}}  
\big(x(\node{v}_{i'}) - x(\node{v}_i)\big)\big(y (\node{v}_{i'}) - y (\node{v}_i)\big),
\end{align*}
and, thus, the second statement. \qed
\end{proof}

\begin{lem} \label{lem:2} The product rule in Lemma \ref{lem:1} implies for $1 \leq p,q \leq \infty$, $\frac{1}{p} + \frac{1}{q} = 1$, the following $\mathcal{L}^p$-norm estimates:
\begin{enumerate}[label=(\roman*)]
\item  $\ds\|\nabla_{\Ll}( x \, y)\|_{\normp{E}} \leq \| \nabla_{\Ll} x\|_{\normp{E}} \|y\|_{\norminf{G}} + \|x\|_{\normp{G}}\| \nabla_{\Ll} y\|_{\norminfp{E}}$.\\[-2mm]
\item 
$ \ds \|\Ll( x \, y)\|_{\normp{G}} \! \leq \|\Ll x\|_{\normp{G}} \| y \|_{\norminf{G}} \! + \|x\|_{\normp{G}} \|\Ll_S y\|_{\norminf{G}} \!
+ \|\nabla_{\Ll} x\|_{\normp{E}} \|\nabla_{\Ll} y\|_{\norminfq{E}}$.  
\end{enumerate}
\end{lem}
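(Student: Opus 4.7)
The plan is to derive each inequality directly from the corresponding pointwise identity in Lemma~\ref{lem:1}, by taking an $\mathcal{L}^p$ norm, applying the triangle inequality, and then bounding each resulting term by a H\"older-type estimate. The subtlety throughout is that some factors live on vertices and others on edges, so the argument requires careful bookkeeping about the order of summation; this is exactly what forces the appearance of the hybrid norms $\|\cdot\|_{\norminfp{E}}$ and $\|\cdot\|_{\norminfq{E}}$.

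For part $(i)$, Lemma~\ref{lem:1}$(i)$ together with the triangle inequality in $\mathcal{L}^p(E)$ splits $\|\nabla_{\Ll}(xy)\|_{\normp{E}}$ into the two pieces associated with $\nabla_{\Ll} x(e_{\subind{i}{i'}})\, y(\node{v}_i)$ and $x(\node{v}_{i'})\, \nabla_{\Ll} y(e_{\subind{i}{i'}})$. The first is bounded by $\|\nabla_{\Ll} x\|_{\normp{E}}\|y\|_{\norminf{G}}$ after pulling $|y(\node{v}_i)|\leq\|y\|_{\norminf{G}}$ out of the edge sum. For the second, I reorder the summation by grouping edges according to their endpoint $\node{v}_{i'}$; the resulting inner sum $\sum_{i:\,e_{\subind{i}{i'}}\in E}|\nabla_{\Ll} y(e_{\subind{i}{i'}})|^p$ is, by the undirected-edge symmetry $|\nabla_{\Ll}y(e_{\subind{i}{i'}})|=|\nabla_{\Ll}y(e_{\subind{i'}{i}})|$, bounded uniformly by $\|\nabla_{\Ll} y\|_{\norminfp{E}}^p$, while the outer sum in $|x(\node{v}_{i'})|^p$ collapses to $\|x\|_{\normp{G}}^p$, producing the claimed bound.

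For part $(ii)$, I first rewrite the correction sum in Lemma~\ref{lem:1}$(ii)$: since $\Ll_{\subind{i}{i'}}=-\Aa_{\subind{i}{i'}}$ off the diagonal and $\nabla_{\Ll}x(e_{\subind{i}{i'}})\nabla_{\Ll}y(e_{\subind{i}{i'}})=\Aa_{\subind{i}{i'}}(x(\node{v}_i)-x(\node{v}_{i'}))(y(\node{v}_i)-y(\node{v}_{i'}))$, the correction identity becomes $\Ll_{\subind{i}{i'}}(x(\node{v}_i)-x(\node{v}_{i'}))(y(\node{v}_i)-y(\node{v}_{i'}))=-\nabla_{\Ll} x(e_{\subind{i}{i'}})\,\nabla_{\Ll} y(e_{\subind{i}{i'}})$. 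The triangle inequality then yields three summands. The first two, $\|\Ll x\cdot y\|_{\normp{G}}\leq\|\Ll x\|_{\normp{G}}\|y\|_{\norminf{G}}$ and $\|x\cdot\Ll_S y\|_{\normp{G}}\leq\|x\|_{\normp{G}}\|\Ll_S y\|_{\norminf{G}}$, follow immediately by pulling the $\mathcal{L}^\infty$ factor out pointwise. For the third summand, at each fixed vertex $\node{v}_i$ I apply H\"older with conjugate exponents $(p,q)$ to the inner neighbor sum, controlling the $q$-factor uniformly by $\|\nabla_{\Ll} y\|_{\norminfq{E}}$; taking the $\mathcal{L}^p(G)$ norm in $i$ and interchanging the two summations recognises the remaining double sum as $\|\nabla_{\Ll} x\|_{\normp{E}}^p$, which gives the last stated term.

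The main obstacle is not depth but precision of bookkeeping: one must track carefully which index plays the role of the ``fixed'' vertex in the hybrid-norm definition, and pair the H\"older exponents $(p,\infty)$ used in part $(i)$ with the correct partner norm, and likewise $(p,q)$ in part $(ii)$. The endpoint $p=\infty$ (so $q=1$) and $p=1$ (so $q=\infty$) go through identically, with sums replaced by maxima where needed.
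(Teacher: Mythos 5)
Your proposal is correct and follows essentially the same route as the paper's proof: triangle inequality applied to the pointwise product rules of Lemma \ref{lem:1}, the trivial $\mathcal{L}^\infty$ pull-outs for the first terms, the regrouping of the edge sum by the endpoint $\node{v}_{i'}$ (with the undirected-edge symmetry justifying the use of $\|\cdot\|_{\norminfp{E}}$) in part $(i)$, and the vertexwise H\"older estimate with conjugate exponents $(p,q)$ identifying the correction term with $\|\nabla_{\Ll}x\|_{\normp{E}}\|\nabla_{\Ll}y\|_{\norminfq{E}}$ in part $(ii)$. Your explicit remark that the correction sum equals $-\nabla_{\Ll}x\,\nabla_{\Ll}y$ summed over neighbors makes precise a step the paper leaves implicit, but the argument is the same.
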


\begin{proof}
In general lines, the two statements follow from the product formulas in Lemma \ref{lem:1} in combination with H\"olders inequality and the triangle inequality for the $\mathcal{L}^p$-norm. More precisely, we get for $1 \leq p < \infty$ (for $p = \infty$ the argumentation line is analog) the estimates
\begin{align*}
\| \nabla_\Ll( x \, y)\|_{\normp{E}} & = \Big( \sum_{i,i': \, e_{\subind{i}{i'}} \in E} \big| \nabla_{\Ll} x (e_{\subind{i}{i'}}) y(\node{v}_i) + x(\node{v}_{i'}) \nabla_{\Ll} y (e_{\subind{i}{i'}})\big|^p \Big)^{1/p} \\
& \leq \| \nabla_{\Ll} x\|_{\normp{E}} \|y\|_{\norminf{G}} + 
\Big( \sum_{i'=1}^n |x(\node{v}_{i'})|^p \sum_{i: \, e_{\subind{i}{i'}} \in E} \big| \nabla_{\Ll} y (e_{\subind{i}{i'}})\big|^p \Big)^{1/p} \\
&\leq \| \nabla_{\Ll} x\|_{\normp{E}} \|y\|_{\norminf{G}} + \|x\|_{\normp{G}}\| \nabla_{\Ll} y\|_{\norminfp{E}},
\end{align*} 
and
\begin{align*} \| \Ll( x \, y)\|_{\normp{G}} & \leq \| \Ll x \, y \|_{\normp{G}} + \| x \, \Ll_S y \|_{\normp{G}}  \; + \\ 
& \qquad + \left( \sum_{i = 1}^n \Big| \sum_{i' \, e_{\subind{i}{i'}} \in E} \Ll_{\subind{i}{i'}} 
\big(x(\node{v}_{i'}) - x(\node{v}_i)\big)\big(y (\node{v}_{i'}) - y (\node{v}_i)\big) \Big|^p
\right)^{\frac1p} \\
& \leq \| \Ll x \, y \|_{\normp{G}} + \| x \, \Ll_S y \|_{\normp{G}}  \; + \\ 
& \qquad + \left( \sum_{i = 1}^n  \sum_{i': \, e_{\subind{i}{i'}} \in E} 
|\nabla_{\Ll}x (e_{\subind{i}{i'}})|^p \left(\sum_{i': \, e_{\subind{i}{i'}} \in E}  |\nabla_{\Ll}y (e_{\subind{i}{i'}})|^q \right)^{\frac{p}{q}}
\right)^{\frac1p} \\
\\ &\leq \|\Ll x\|_{\normp{G}} \| y \|_{\norminf{G}} + \|x\|_{\normp{G}} \|\Ll_S y\|_{\norminf{G}} 
+ \|\nabla_{\Ll} x\|_{\normp{E}} \|\nabla_{\Ll} y\|_{\norminfq{E}}.
\end{align*} \qed
\end{proof}

\begin{rem}
The identity in Lemma \ref{lem:1} $(ii)$ and the estimates in Lemma \ref{lem:2} are not entirely symmetric in terms of the operator $\Ll$ and the adopted $\mathcal{L}^p$-norms. In Lemma \ref{lem:1} $(ii)$ and Lemma \ref{lem:2} $(ii)$ more symmetry is obtained if the standard Laplacian $\Ll = \Ll_S$ is used. The estimate in Lemma \ref{lem:2} $(i)$ gets symmetric in terms of the norm if the $\mathcal{L}^{\infty}$-norm is used.
\end{rem}

{\noindent \bfseries Proof of Theorem \ref{thm:globalerrorestimate}} \\
$(i) \Rightarrow (i)'$: The approximant $x_* \in \mathcal{N}^{\mathrm{PUM}}$ is a linear combination $x_{*} = \sum_{j=1}^J \varphi^{(j)} x_{*}^{(j)}$ of local approximations, where $(\varphi^{(j)})_{j=1}^{J}$ is a partition of unity subordinate to the cover $\{ V_j \}_{j=1}^J$, satisfying $\mathrm{supp} (\varphi^{(j)}) \subseteq V_j$, $\varphi^{(j)} \geq 0$, and $\sum_{j = 1}^J \varphi^{(j)} = 1$. Therefore, the triangle inequality for the $\mathcal{L}^p$-norm and the local bounds in $(i)$ provide the estimates
\begin{align*}
\|x - x_{*}\|_{\normp{G}} &= \Big\| \sum_{j=1}^J \varphi^{(j)}(x - x_{*}) \Big\|_{\normp{G}} \leq \sum_{j=1}^J \|\varphi^{(j)} (x - x_{*})\|_{\normp{G}} \\ &\leq \sum_{j=1}^J \|\varphi^{(j)}\|_{\norminf{G_j}} \|x - x_{*}\|_{\normp{G_j}} \leq \sum_{j=1}^J \error_{p,0}^{(j)}.
\end{align*}
$(i), (ii) \Rightarrow (i)',(ii)'$: As the functions $\varphi^{(j)}$ satisfy $\mathrm{supp} (\varphi^{(j)}) \subseteq V_j$ and, in addition, the boundary condition $(\mathrm{BC})$, we get
\[\mathrm{supp} (\nabla_{\Ll} \varphi^{(j)}) \subseteq E_j \quad \text{and} \quad\mathrm{supp} (\nabla_{\Ll}(\varphi^{(j)} (x - x_*^{(j)}))) \subseteq E_j.\]
Thus, also
$\nabla_{\Ll} \varphi^{(j)} =  \nabla_{\Ll^{(j)}} \varphi^{(j)}$ and $\nabla_{\Ll}(\varphi^{(j)} (x - x_*^{(j)})) = \nabla_{\Ll^{(j)}}(\varphi^{(j)} (x - x_*^{(j)}))$ on the subset $E_j$. This fact, together with the triangle inequality, gives
\begin{align*}
\|\nabla_{\Ll}x - \nabla_{\Ll} x_{*}\|_{\normp{E}} &\! = \! \Big\| \sum_{j=1}^J \nabla_{\Ll}(\varphi^{(j)}(x - x_{*})) \Big\|_{\normp{E}} \hspace{-0.3cm}\leq \sum_{j=1}^J \|\nabla_{\Ll^{(j)}}\varphi^{(j)} (x - x_{*})\|_{\normp{E_j}}. 
\end{align*}
Now, using Lemma \ref{lem:2} $(i)$ together with the assumptions $(i), (ii)$ of the theorem, we can conclude that
\begin{align*}
\|\nabla_{\Ll}x - \nabla_{\Ll} x_{*}\|_{\normp{E}} & \leq \ds \sum_{j=1}^J  \Big(  \| \varphi^{(j)} \|_{\norminf{G_j}} \|\nabla_{\Ll^{(j)}} (x - x_*^{(j)})  \|_{\normp{E_j}} \\[-3mm] &\hspace{3cm} + \|\nabla_{\Ll^{(j)}} \varphi^{(j)} \|_{\norminfp{E_j}} \| x - x_*^{(j)} \|_{\normp{G_j}} \Big)\\
& \leq \ds \sum_{j=1}^J  \left( \error_{p,1}^{(j)} + \|\nabla_{\Ll} \varphi^{(j)} \|_{\norminfp{E}} \,\error_{p,0}^{(j)}\right).
\end{align*}
$(i), (ii), (iii) \Rightarrow (i)',(ii)',(iii)'$: Since $\mathrm{supp} (\varphi^{(j)}) \subseteq V_j$ and the condition $(\mathrm{BC})$ is satisfied, we get, similarly as before, the inclusions
\[\mathrm{supp} (\Ll \varphi^{(j)}) \subseteq V_j \quad \text{and} \quad\mathrm{supp} (\Ll (\varphi^{(j)} (x - x_*^{(j)}))) \subseteq V_j.\]
This implies that
$\Ll \varphi^{(j)} =  \Ll^{(j)} \varphi^{(j)}$ and $ \Ll(\varphi^{(j)} (x - x_*^{(j)})) = \Ll^{(j)}(\varphi^{(j)} (x - x_*^{(j)}))$ on the subset $V_j$, and, thus
\begin{align*}
\| \Ll x - \Ll x_{*}\|_{\normp{G}} & =  \Big\| \sum_{j=1}^J \Ll (\varphi^{(j)}(x - x_{*})) \Big\|_{\normp{G}} \hspace{-0.2cm}\leq \sum_{j=1}^J \| \Ll^{(j)} \varphi^{(j)} (x - x_{*})\|_{\normp{G_j}}. 
\end{align*}
Now, the statement $(iii)'$ follows from Lemma \ref{lem:2} $(ii)$ and the assumptions $(i)$, $(ii)$, and $(iii)$ of the theorem. \qed

\section{Numerical experiments} \label{sec:numerics}

In this section, we perform a numerical case study in order to illustrate several properties of the PUM on graphs. In particular, we show that the construction of the partition of unity via graph partitioning and domain augmentation, as described in Algorithm \ref{alg:greedyjcenter} and Algorithm \ref{alg:enlargedomains}, is computationally efficient, and that the GBF-PUM approximation scheme in Algorithm \ref{algorithm1} provides accurate global interpolation results. All codes are implemented in a Matlab environment, while the experiments have been carried out on a laptop with an Intel(R) Core i7 6500U CPU 2.50GHz processor and 8.00GB RAM.

The test graph $G$ consists of a road network of the state of Minnesota, is referred to as Minnesota graph and has been retrieved from \cite{RossiAhmed2015}. It consists of $n=2642$ vertices and is characterized by $3304$ edges. As a distance metric $\mathrm{d}$ on $G$, we use the shortest-path distance. As a Laplacian, we use the normalized graph Laplacian $\Ll_N = \Dd^{-1/2} \Ll_S \Dd^{-1/2}$. An exemplary partition of the entire graph $G$ in $J=8$ overlapping subdomains obtained by a combination of Algorithm \ref{alg:greedyjcenter} and Algorithm \ref{alg:enlargedomains} (discussed in Section \ref{sec:PU}) is displayed in Fig. \ref{fig:graphPartitionMinnesota}. 

Regarding the interpolation nodes for the tests, we recursively generate a sequence $W_N$ of sampling sets in $V$ such that $\#W_N=N$, $W_{N-1}$ is contained in $W_N$ and the new node $\node{w}_N \in W_N$ is chosen randomly from $V \backslash W_{N-1}$. As a test function, we use the bandlimited signal $x_B = \sum_{k=1}^{10} u_k$, consisting of the first $10$ eigenvectors of the normalized Laplacian $\Ll_N$ of $G$. Moreover, as a GBF-kernel for interpolation on the subgraphs, we employ the variational spline kernel defined in Example \ref{exa:GBFs} $(ii)$. The chosen parameters for graph interpolation (i.e., $\gamma = 0$) are $\epsilon=0.001$ and $s=2$. 

Once the cover $\{V_j\}_{j=1}^J$ is computed, the partition of unity $\{\varphi_j\}_{j=1}^J$ is generated by the Shepard weights given in Example \ref{example:1} $(ii)$. Using Algorithm \ref{algorithm1} with $J=3$ domains and the augmentation parameter $r = 8$, the local interpolation results for the test function $x_B$ on the $3$ subdomains are shown in Fig. \ref{fig:graphPUMlocalminnesota}. The corresponding composed GBF-PUM interpolant $x_{B,\circ}$ together with an illustration of the absolute error $|x_B - x_{B,\circ}|$ is given in Fig. \ref{fig:graphPUMglobalminnesota}.
 
\begin{figure}[htbp]
	\centering
	\includegraphics[height= 3.8cm]{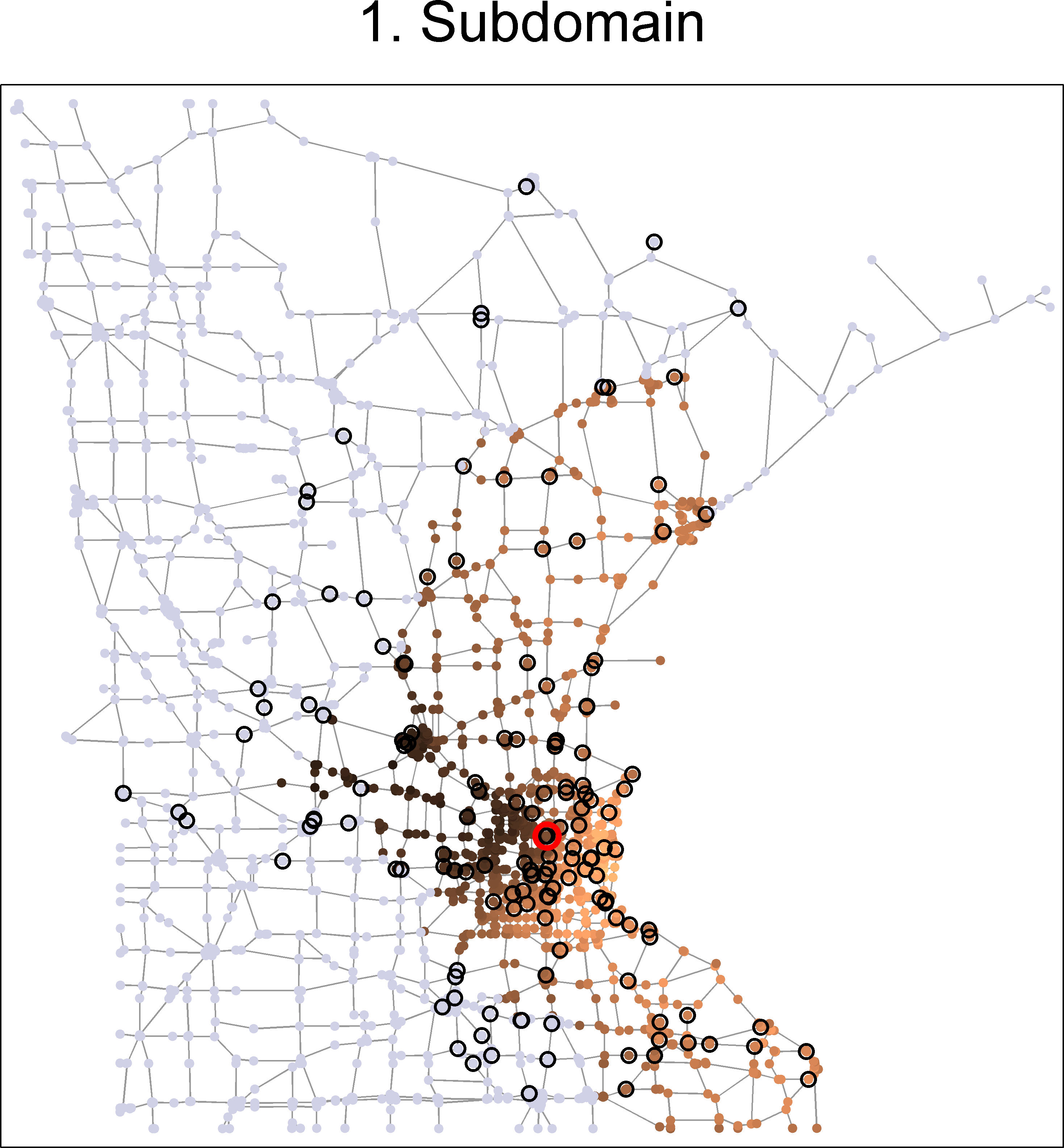}
    \includegraphics[height= 3.8cm]{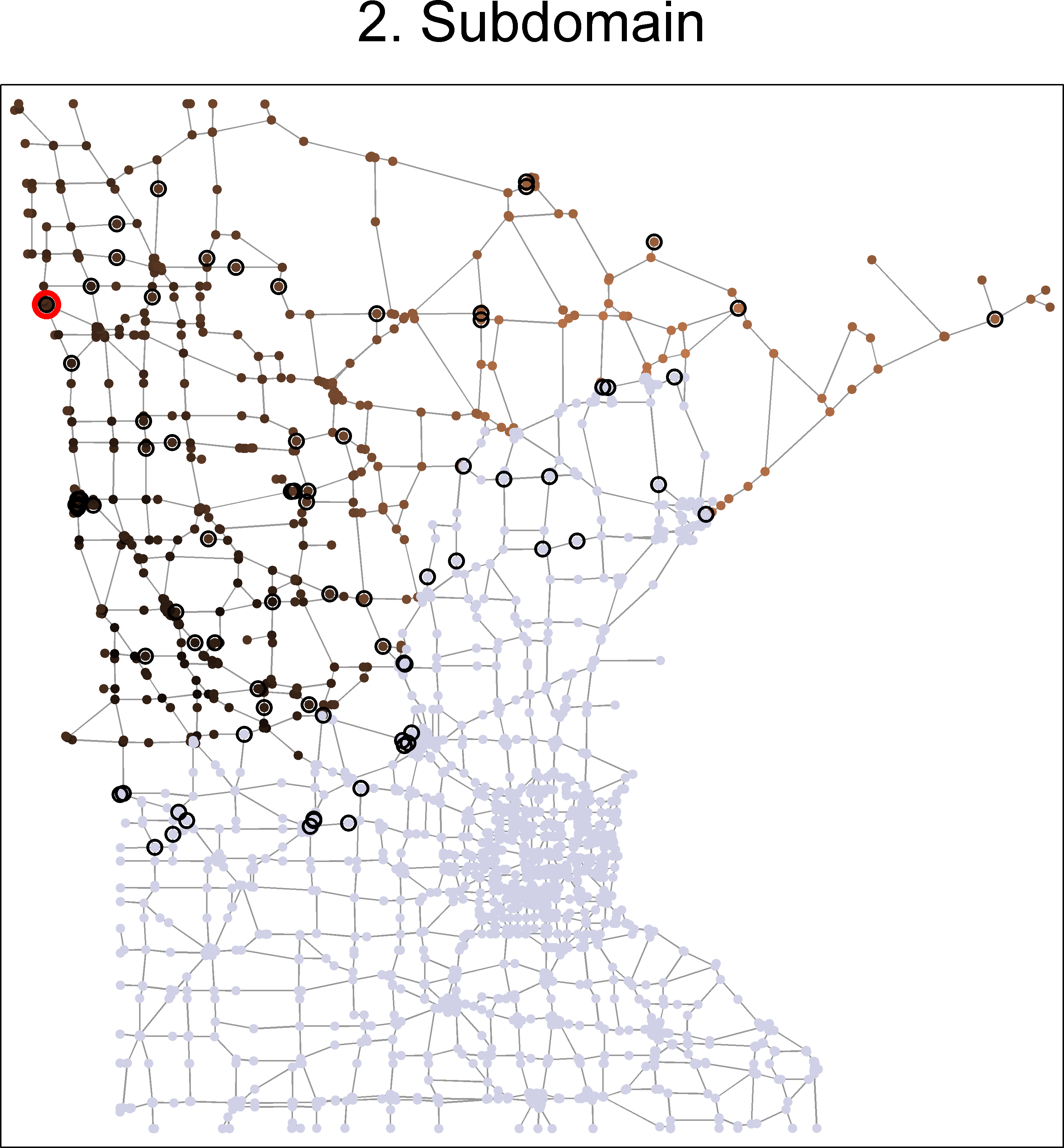}
    \includegraphics[height= 3.8cm]{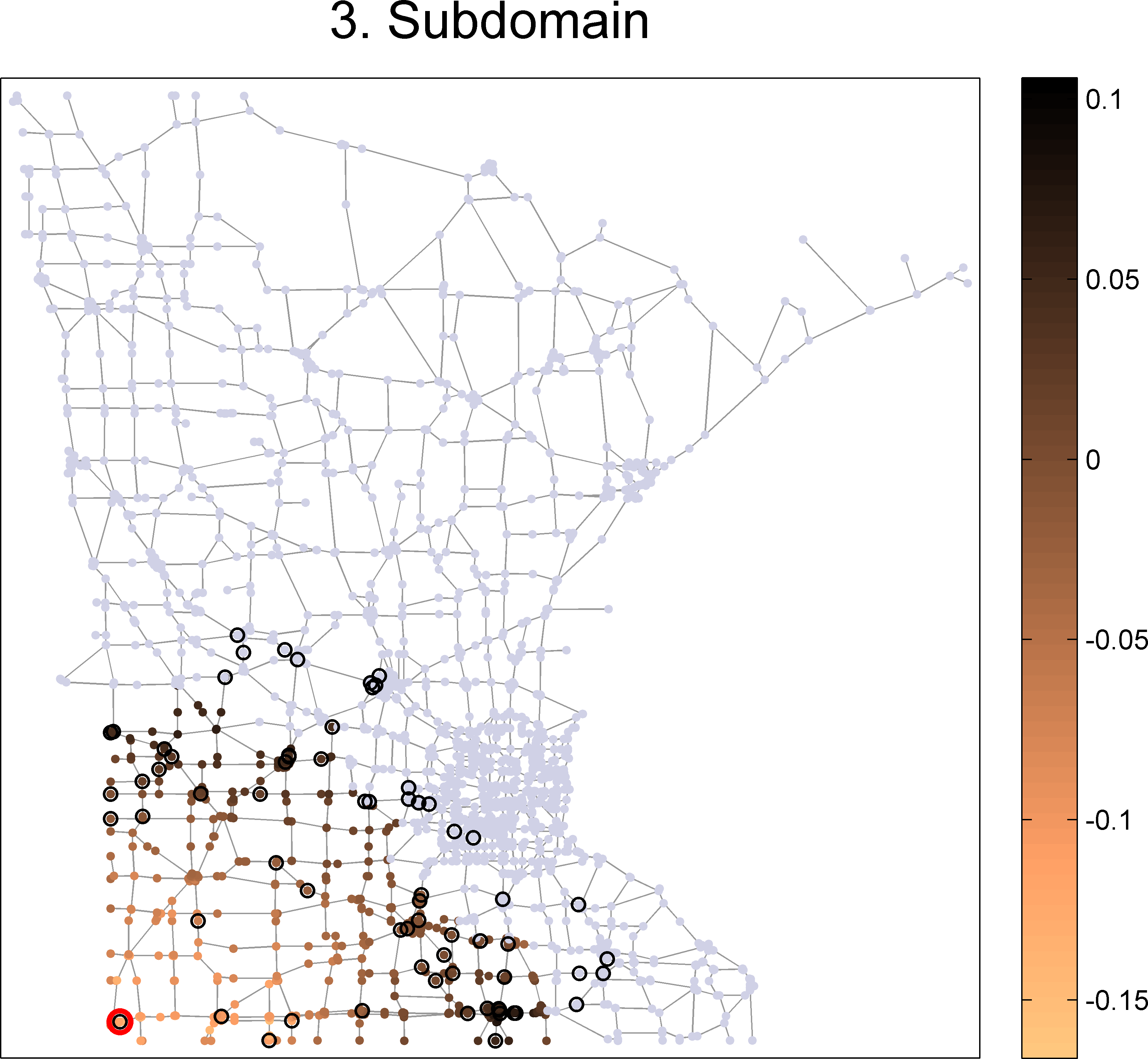}
	\caption{Local GBF-interpolation of the test function $x_B$ on $J = 3$ subdomains of the Minnesota graph based on the respective sampling subsets of $W_{200}$. The black ringed nodes indicate the sampling nodes, while the red ringed nodes denote the centers $\node{q}_j$ of the subdomains.}
	\label{fig:graphPUMlocalminnesota}
\end{figure}

\begin{figure}[htbp]
	\centering
 	\includegraphics[height= 5.1cm]{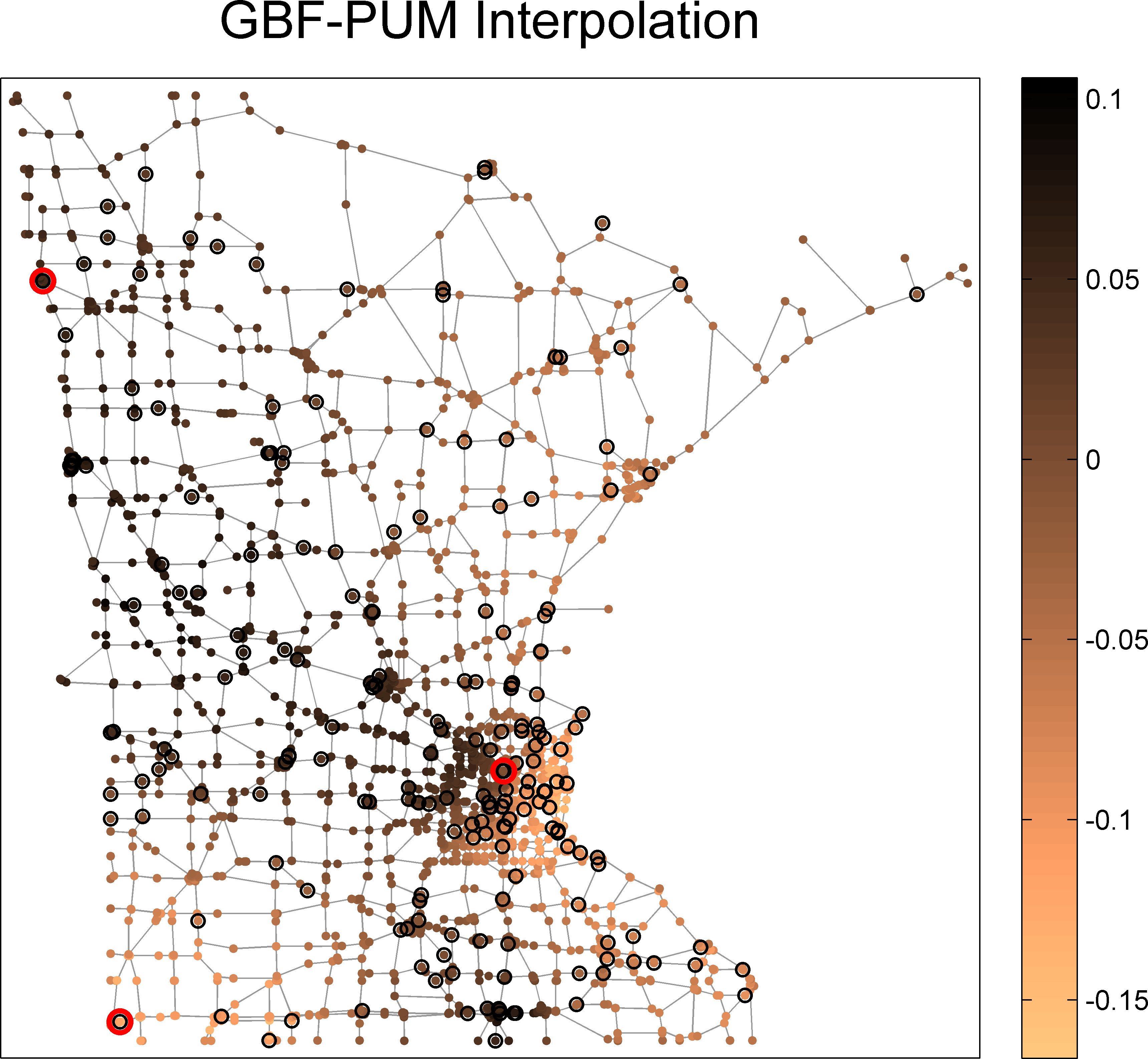} \hspace{1mm}
    \includegraphics[height= 5.1cm]{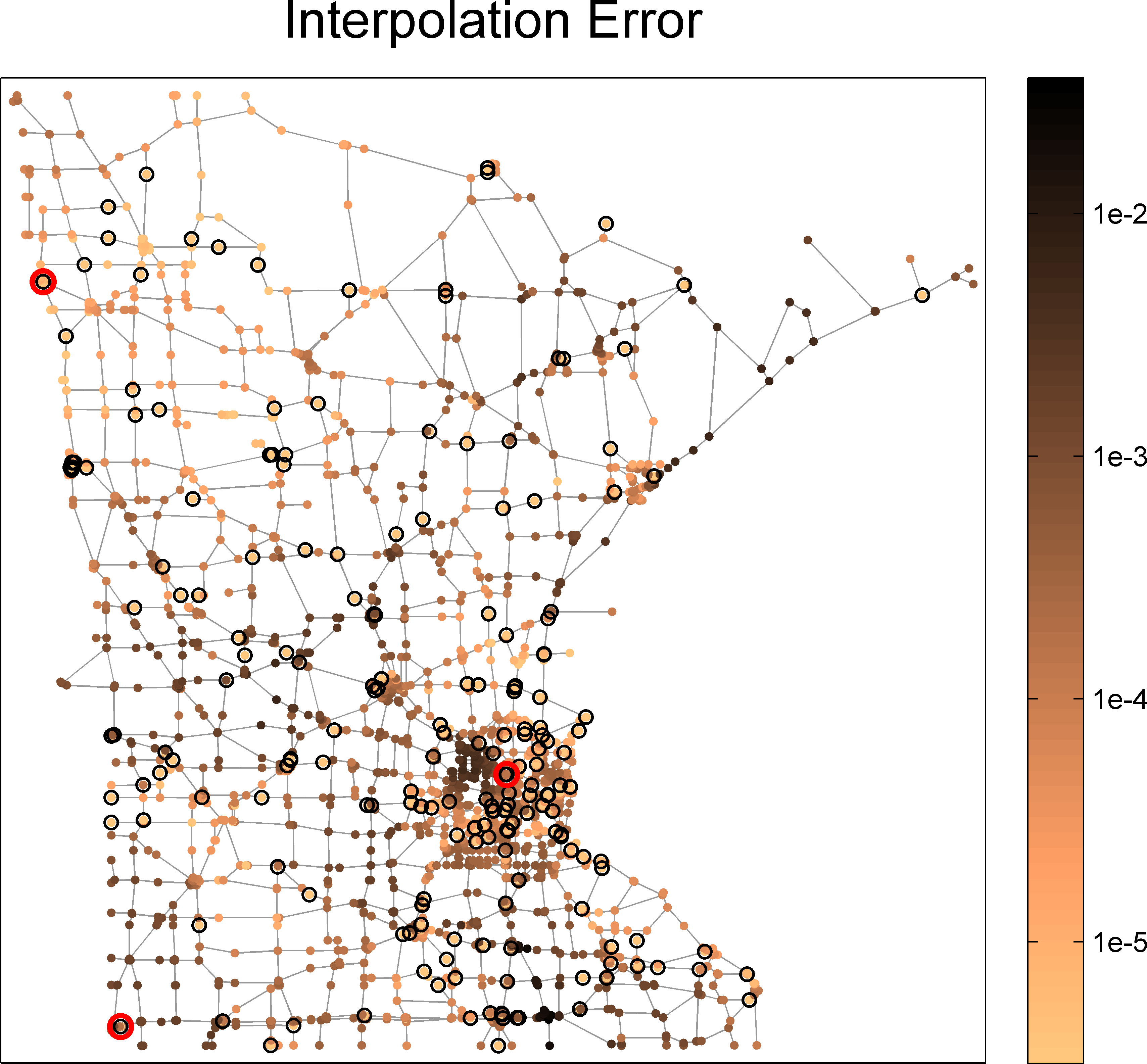} 
	\caption{GBF-PUM method for the Minnesota graph. Left: Global GBF-PUM interpolant $x_{B,\circ}$  of the test function $x_B$ for given samples on the nodes $W_{200}$. Right: Absolute error $|x_B - x_{B,\circ}|$ with respect to the original signal. The ringed nodes indicate the sampling nodes. }
	\label{fig:graphPUMglobalminnesota} 
\end{figure}

To analyze the behavior of the GBF-PUM interpolant in Algorithm \ref{algorithm1} compared to the global GBF method considered in \cite{erb2019b}, we compute the relative root mean square error (RRMSE) and the CPU time expressed in seconds.
In Table \ref{tab_RRMSE_times}, we report RRMSEs and CPU times for several sets of vertices $W_N$. By doing so, we compare the results achieved for PUMs with smaller and larger overlappings of the subdomains in the cover. In particular, Table \ref{tab_RRMSE_times} highlights that, for enlarged subgraph sizes, a general improvement in terms of accuracy is obtained, but also a larger computational cost is necessary. In Fig. \ref{fig1:RRMSE-time} and Fig. \ref{fig2:RRMSE-time}, we represent the same above-mentioned quantities for increasing values of $N$ from $200$ to $2600$. In these two figures, we compare the GBF-PUM interpolation results also with the global GBF method introduced in \cite{erb2019b}. From this study, we can observe that with growing number $N$ of sampling nodes the interpolation errors of the GBF-PUM scheme improve, as predicted by Theorem \ref{thm:globalerrorestimate} (since the local errors decrease). On the other hand, the CPU times show that the use of the PUM for graph signal interpolation is particularly advantageous if the number of interpolation nodes gets large. 

\begin{table}[ht!]
		\begin{center}
			\begin{tabular}{|c|cc|cc|} 
			  \hline
			  & \multicolumn{2}{c|}{smaller overlapping} & \multicolumn{2}{c|}{larger overlapping} \\
			  & \multicolumn{2}{c|}{($r=8$)} & \multicolumn{2}{c|}{($r=12$)} \\
			  \cline{2-5} 
				\rule[-2mm]{0mm}{7mm}
			$N$ & RRMSE & time & RRMSE & time\\
				\hline 
			  \rule[0mm]{0mm}{3ex}
      $132$  & $7.27$e$-2$ &  $0.23$ & $6.16$e${-2}$ & $0.37$ \\
			  \rule[0mm]{0mm}{3ex}
      $264$  & $2.51$e$-2$ &  $0.26$ & $7.95$e${-3}$ & $0.41$ \\
			  \rule[0mm]{0mm}{3ex}
      $528$  & $6.88$e$-3$ &  $0.27$ & $1.69$e${-3}$ & $0.48$ \\
			  \rule[0mm]{0mm}{3ex}
      $1056$ & $8.89$e$-4$ &  $0.35$ & $8.50$e${-5}$ & $0.65$ \\
			  \rule[0mm]{0mm}{3ex}
      $2112$ & $5.55$e$-6$ &  $0.62$ & $5.55$e${-6}$ & $1.06$ \\
				\hline
			\end{tabular}
		\end{center}
    \caption{Interpolation errors and CPU times (in seconds) computed by applying GBF-PUM with $J = 8$ overlapping subdomains and using a variational spline kernel with $\epsilon = 0.001$ and $s=2$.}
    \label{tab_RRMSE_times}
	\end{table}

\begin{figure}[htbp]
	\centering
	\includegraphics[height = 4.5cm]{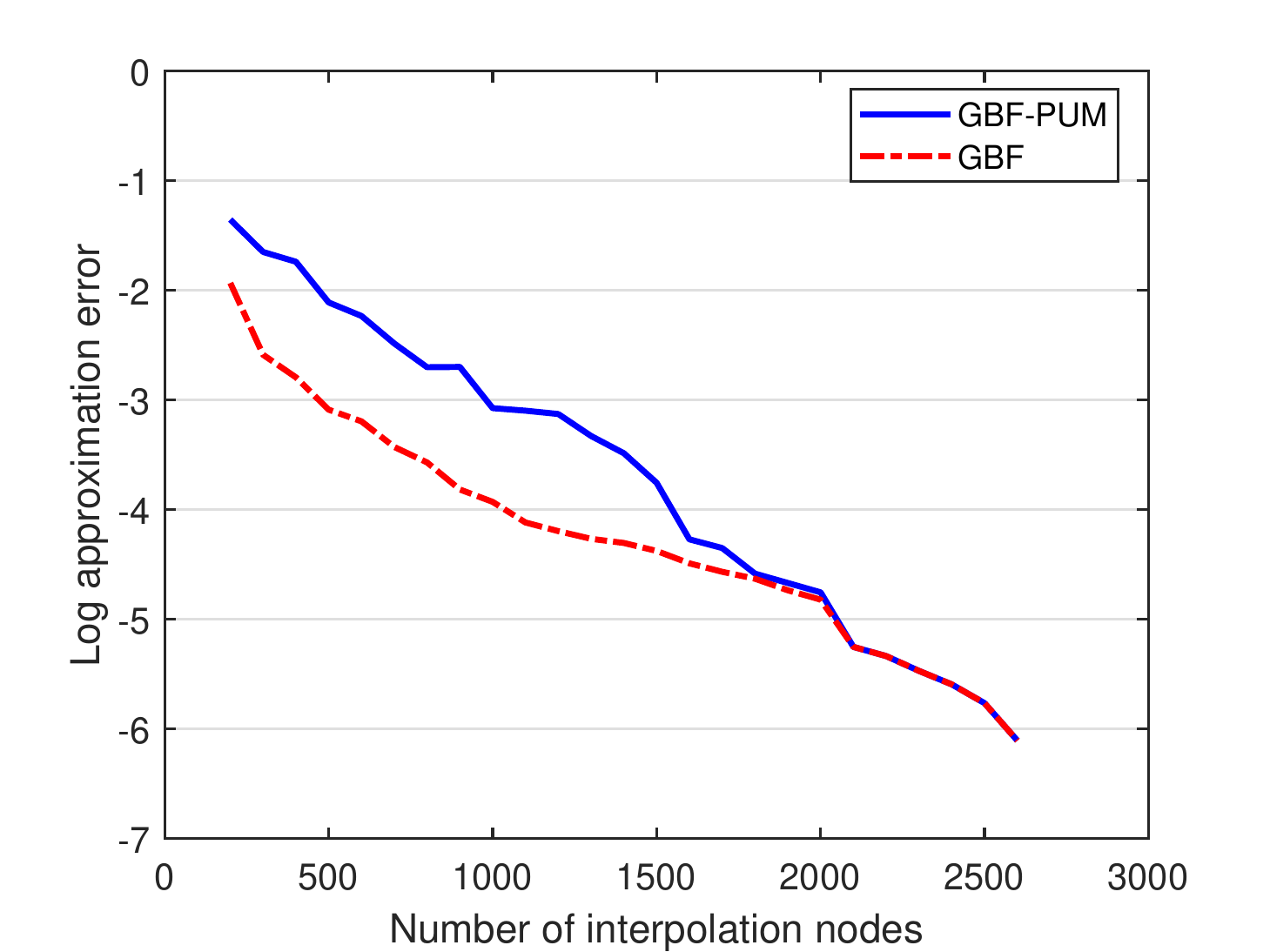} \hskip -0.3cm  
  \includegraphics[height = 4.5cm]{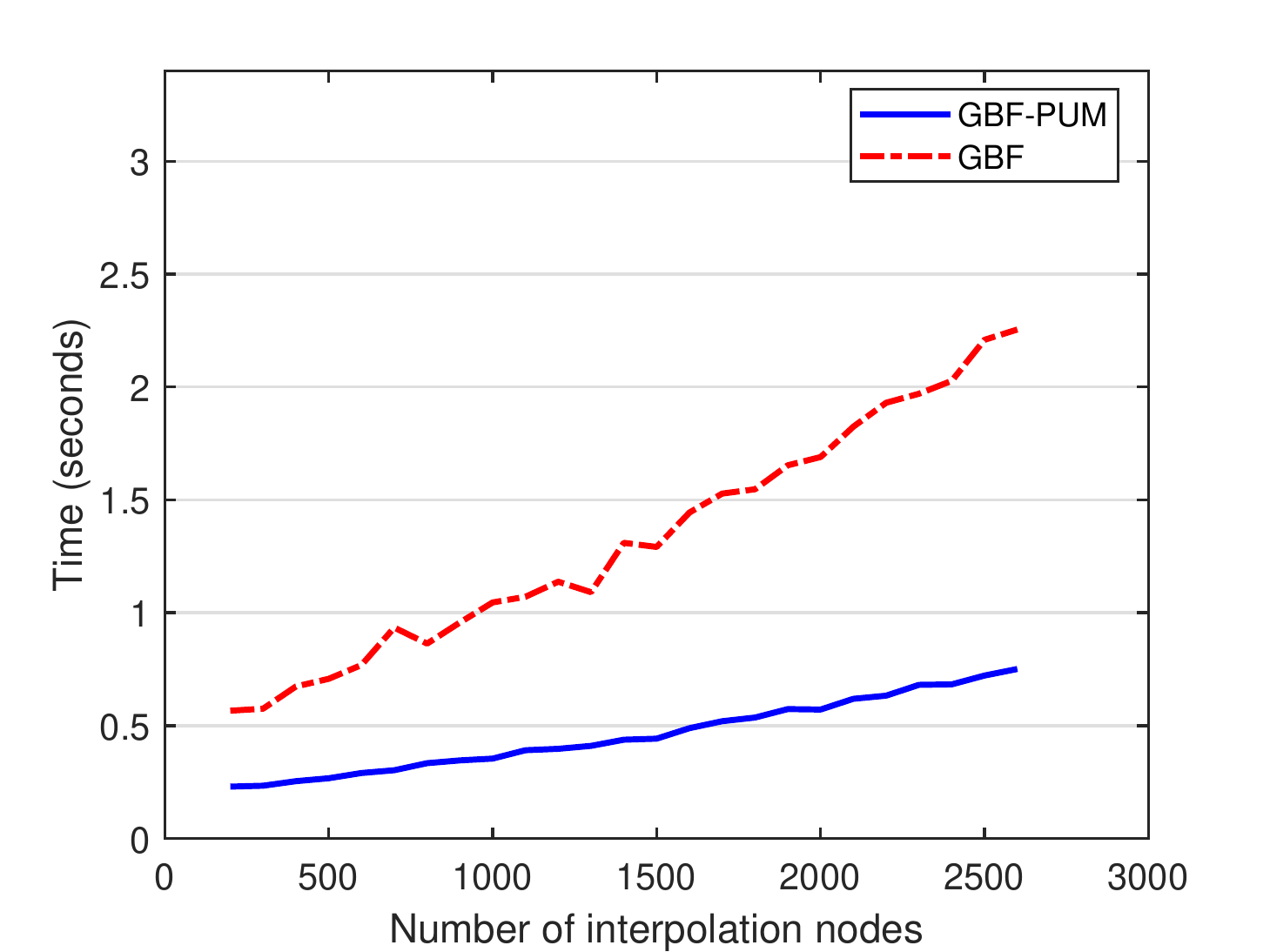}                 
	\caption{RRMSEs and CPU times (in seconds) obtained for the GBF-PUM (with $J = 8$ and $r=8$) and the global GBF scheme in terms of $N$ interpolation nodes. Tests have been carried out by using the variational spline kernel with $\epsilon = 0.001$ and $s=2$.}
	\label{fig1:RRMSE-time}
\end{figure}

\begin{figure}[htbp]
	\centering
	\includegraphics[height = 4.5cm]{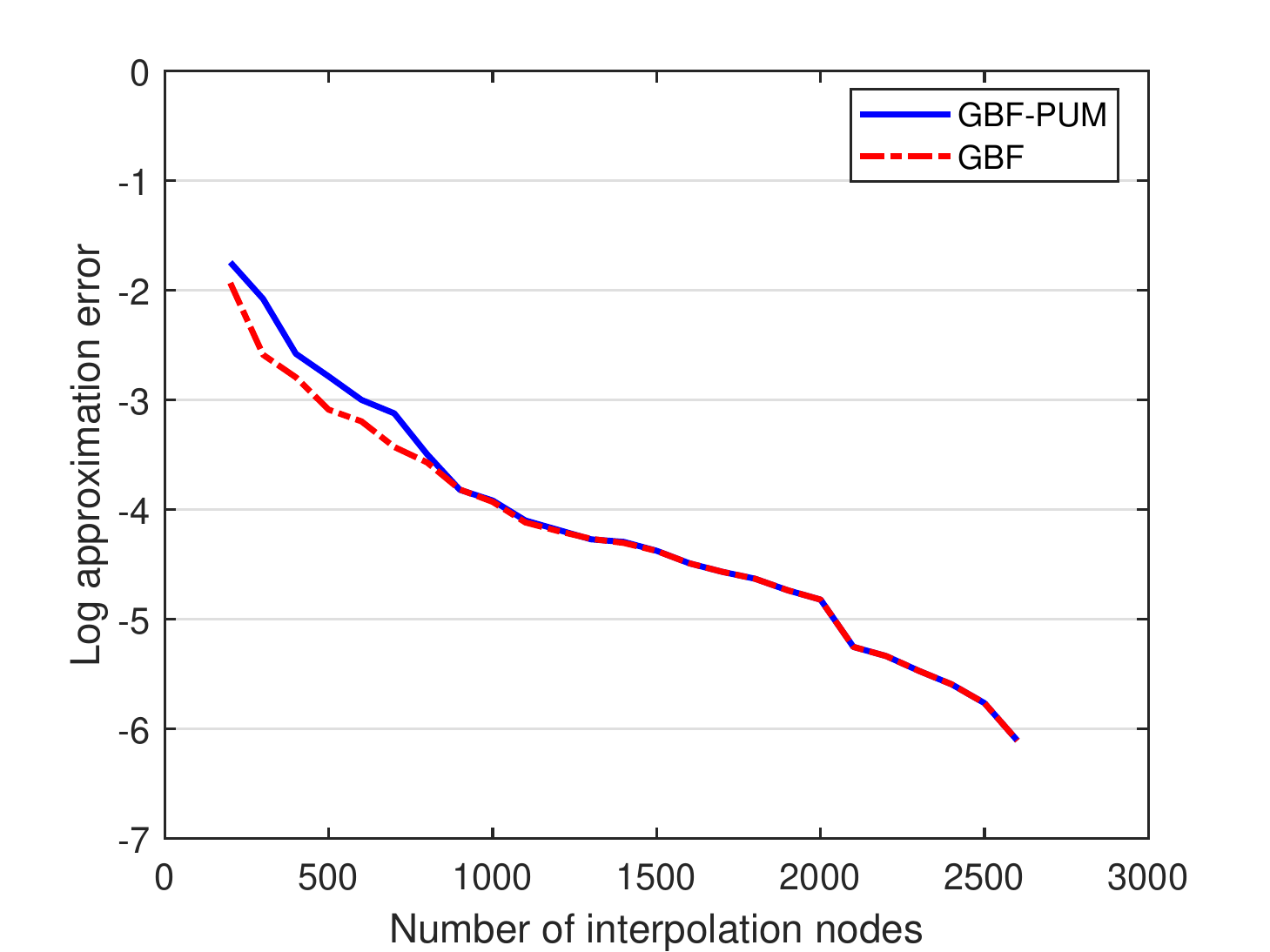} \hskip -0.3cm  
  \includegraphics[height = 4.5cm]{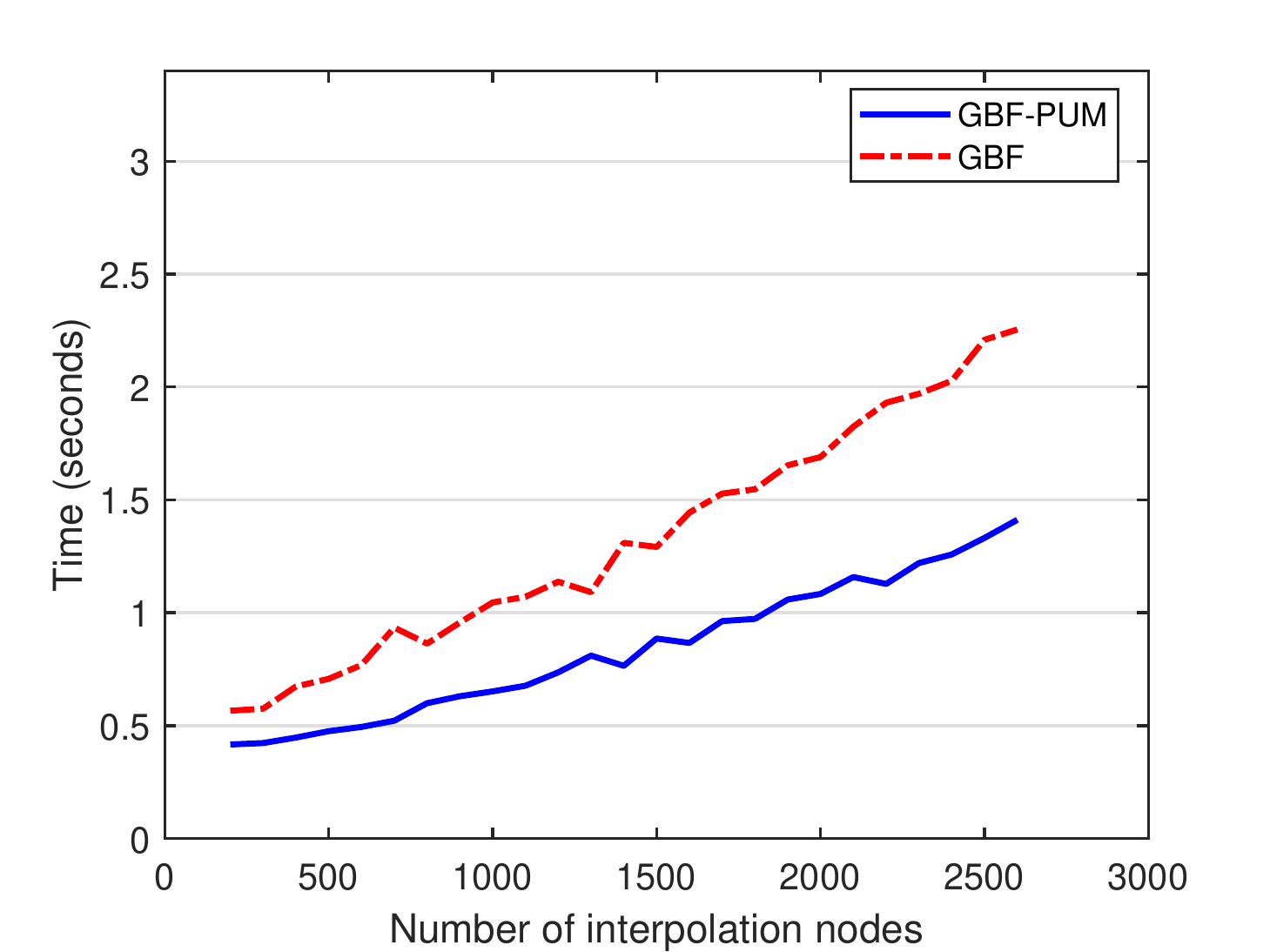}                 
	\caption{RRMSEs and CPU times (in seconds) obtained for the GBF-PUM (with $J = 8$ and $r=12$) and the global GBF scheme in terms of $N$ interpolation nodes. Tests have been carried out by using the variational spline kernel with $\epsilon = 0.001$ and $s=2$.}
	\label{fig2:RRMSE-time}
\end{figure}

These first experimental results indicate that the parameters $J$ (number of clusters) and $r$ (augmentation distance for the subdomains) are two main control parameters for the accuracy and the computational cost of the PUM on graphs. After testing the PUM for a range of different values $J$ and $r$, we made the following observations: (a) the experiments suggest that a good compromise between $J$ and $r$ must be found in order to guarantee both, accuracy and efficiency; (b) improved accuracy results can be obtained if $J$ increases and, simultaneously, $r$ is enlarged too; (c) increasing the parameter $r$ seems to improve accuracy more than increasing $J$. As an enlarged parameter $r$ leads also to a larger computational cost, further investigations are necessary to find an optimal balancing of the different parameters.  

\section{Conclusions and future work}
In this paper, we proposed an efficient way to generate partitions of unity on graphs via greedy-type $J$-center clustering and a domain augmentation procedure. We investigated how the PUM can be used to reconstruct global signals from local GBF approximations on subdomains, and showed that under suitable restrictions on the PUM global error estimates are inherited from local ones. Moreover, some numerical results were presented, showing the cost-efficiency and the accuracy of the method. Work in progress concerns the optimal selection of the parameters in the PUM, namely the number $J$ of the subdomains, the augmentation factor $r$ for the subdomains, and more sophisticated Shepard weights for the partition of unity. The first experiments presented here showed that a proper choice of these parameters highly influences accuracy and efficiency of the computational procedure. A further development step consists in providing more adaptive techniques for the selection of the partitions on the graph in order to tailor the PUM best possibly to the underlying graph topology.

\section*{Acknowledgments}
The first two authors acknowledge support from the Department of Mathematics \lq\lq Giuseppe Peano\rq\rq\ of the University of Torino via Project 2020 \lq\lq Models and numerical methods in approximation, in applied sciences and in life sciences\rq\rq. This work was partially supported by INdAM-GNCS. This research has been accomplished within RITA (Research ITalian network on Approximation).

\end{document}